\documentclass{ocg}
\setcounter{page}{1}

\usepackage{amssymb}
\usepackage{amsmath}
\usepackage{latexsym}
\usepackage{textcomp}

\usepackage{comment}

\newcommand{\ket}[1]{|#1\rangle}

\newcommand{\cent}[0]{\mbox{\textcent}}
\newcommand{\dollar}[0]{\$}

\newcommand{\mymatrix}[2]{ \left( \begin{array}{#1} #2 \end{array} \right) }
\newcommand{\myvector}[1]{\mymatrix{c}{#1}}
\newcommand{\paran}[1]{\left( #1 \right)}

\newtheorem{fact}{Fact}

\begin{document}

\title{UNCOUNTABLE CLASSICAL AND QUANTUM COMPLEXITY CLASSES} %TITLE IN CAPITAL LETTERS PLEASE
\author[z]{Maksims Dimitrijevs} 
\author[e]{Abuzer Yakary\i lmaz} 

\address[z]{Faculty of Computing, University of Latvia, \\ Rai\c na bulv\= aris 19, Riga, LV-1586, Latvia\\
	\email{md09032@lu.lv}}
\address[e]{H\"urriyet Mah. 1755. Sok. 8/5, Yeni\c{s}ehir, Mersin, Turkey \\ 
\email{abuzer@boun.edu.tr}}

%% an example where labels are not needed:
%\author{First Author}
%\author{Second Author}
%
%\address{Address of both authors\\
%  \email{\{first,second\}@uni-town.at}}

\maketitle

\begin{abstract}
%It is known that 
Polynomial--time constant--space quantum Turing machines (QTMs) and logarithmic--space probabilistic Turing machines (PTMs) recognize uncountably many languages with bounded error (Say and Yakary\i lmaz 2014, arXiv:1411.7647). In this paper, we investigate more restricted cases for both models to recognize uncountably many languages with bounded error. We show that double logarithmic space is enough for PTMs on unary languages in sweeping reading mode or logarithmic space for one-way head. On unary languages, for quantum models, we obtain middle logarithmic space for counter machines. For binary languages, arbitrary small non-constant space is enough for PTMs even using only counter as memory. For counter machines, when restricted to polynomial time, we can obtain the same result for linear space. For constant--space QTMs, we follow the result for a restricted sweeping head, known as restarting realtime.
\end{abstract}

\section{Introduction}

As a well-known fact that two-wayness and alternation do not help to recognize a nonregular language for constant space Turing machines (finite state automata) \cite{Sze94}. When formally defined first time \cite{Rab63}, one--way probabilistic finite automata (PFAs) were also shown to recognize all and only regular languages with bounded error.  On the other hand, in his seminal paper, Freivalds \cite{Fre81} showed that two-way PFAs can recognize some nonregular languages with bounded error. But, it was shown that two--way PFAs require exponential expected-time to recognize nonregular languages \cite{DS90}. The quantum counterpart of two--way PFAs (two--way QFAs) were defined in \cite{KW97} and it was shown that they can recognize nonregular languages with bounded-error, even with one--way head move in linear time \cite{Yak12C}. Here the computational power comes from the input head being in a superposition called quantum head, which can be used to implement a counter in a very special way (see \cite{YFSA12A}). Later, two--way QFAs with classical head (2QCFAs) were defined \cite{AW02} and it was shown that they can recognize nonregular languages in polynomial expected time.

It is obvious that there are countably many regular languages since the description of any one--way deterministic finite automaton, which defines a single regular language, is finite. Similarly computable (regular or not) languages, the ones recognized by Turing machines, form a countable set. On the other hand, all languages form an uncountable set, which is an evidence of existing uncountably many nonregular languages.

On the other hand, a probabilistic or quantum model can be defined with uncomputable transiton values and so their cardinalities are uncountably many. Then, it is natural to ask whether they define an uncountable class. With unbounded-error (recognition with cutpoint), even unary 2-state QFAs and unary 3-state PFAs\footnote{As another ``probabilistic'' but unconventional model, ultrametric automata can also define uncountably many languages with 2 states \cite{MD16}.} define the classes formed by uncountably many languages \cite{SY14A,SY16A}. (Unary PFAs with 2-states define only a finite number of regular languages \cite{Paz71,SY16A}). So, the interesting case is investigating both models with bounded error: 
\newline \newline
\textit{What are the minimal bounded-error probabilistic and quantum classes that contain uncountably many languages?}

Polynomial--time constant--space quantum Turing machines (QTMs) and logarithmic--space probabilistic Turing machines (PTMs) are known to recognize uncountably many languages \cite{ADH97,SayY14C}. In this paper, we investigate more restricted cases for QTMs and PTMs, i.e. using less space, restricted memory types, and restricted input head moves.

We show that double logarithmic space is enough for PTMs on unary languages in sweeping reading mode or logarithmic space for one--way head. On unary languages, for quantum models, we obtain middle logarithmic space for counter machines. For binary languages, arbitrary small non-constant space is enough for PTMs even using only counter as memory. For counter machines, when restricted to polynomial time, we can obtain the same result for linear space. For constant--space QTMs, we follow the result for a restricted sweeping head, known as restarting realtime.

In the next section, we give the required background with a short introduction to quantum operators and then we present our results in Section \ref{sec:main-results} under four subsections. We first present the results for PTMs that are pedagogically easy to follow (Section \ref{sec:PTM}). Then, we restrict the model to use counter as a memory (Section \ref{sec:PCA}). After this, we also check some restrictions on the head movement (Section \ref{sec:1PTM}). Lastly, we present our quantum results (Section \ref{sec:QCFA}). We close the paper by listing all results with possible future directions in Section \ref{sec:conc}.

\section{Background}

\newcommand{\sigmastar}{\Sigma^{*}}
\newcommand{\tildesigma}{\tilde{\Sigma}}
\newcommand{\tildegamma}{\tilde{\Gamma}}
\newcommand{\tildew}{\tilde{w}}

We assume the reader is familiar with the basics of complexity theory and automata theory.  We refer the reader to \cite{NC00} for a complete reference on quantum computation, to \cite{SayY14} for a pedagogical introduction to QFAs, and to \cite{AY15} for a comprehensive chapter on QFAs.

Throughout the paper, $ \# $ denotes the blank symbol, $ \varepsilon $ denotes the empty string, $ \Sigma $ not containing $\cent$ (the left end-marker) and $\dollar$ (the right end-marker) denotes the input alphabet, $ \tildesigma $ is the set $ \Sigma \cup \{ \cent,\dollar \} $, $ \Gamma $ not containing $ \# $  denotes the work tape alphabet, $ \tildegamma $ is the set $ \Gamma \cup \{ \# \} $, and $ \sigmastar $ is set of all strings obtained from the symbols in $\Sigma$ including the empty string. We order the elements of $ \sigmastar $ lexicographically and then represent the  $i$-th element by $ \sigmastar(i) $ where the first value $ \sigmastar(1) $ is the empty string. 
 % For a given string $w$, $ \tildew = \cent w \dollar $.
 % $ |w| $ is the length of $w$, and  $|w|_\sigma$ denotes the number of occurrences of symbol $\sigma$ in $w$. 
 We fix $ n $ as the length of any given input.

Each model has a read--only one--way infinite input tape with a single head, on which the given input $ w $ is placed as $ \tilde{w} $. All the remaining tape cells are filled with blank symbols. At the beginning of the computation, the input head is placed on the left end-marker. Each machine is designed to guarantee that the input head never visits outside $ \tilde{w} $. A work tape is a two--way infinite tape with a single head, where each tape cell is indexed with an integer. At the beginning of the computation, all cells of a work tape are filled with symbols $\#$ and the head is placed on the cell indexed by zero.

A deterministic Turing machine (DTM) $ D $ having an input tape and a work tape is a 7-tuple
\[
	D = ( S,\Sigma,\Gamma,\delta,s_1,s_a,s_r ),
\]
where $ S $ is the set of finite internal states, $ s_1 \in S $ is the initial state, $ s_a \in S $ and $ s_r \in S $ ($s_a \neq s_r$) are the accepting and rejecting states, respectively, and $ \delta $ is the transition function
\newcommand{\directions}{ \{ \leftarrow,\downarrow,\rightarrow \} }
\[
	\delta: S \times \tildesigma \times \tildegamma \rightarrow S \times \tildegamma \times \directions \times \directions
\] 
that governs the behaviours of $D $ as follows: When $ D $ is in state $ s \in S $, reads symbol $ \sigma \in \tildesigma $ on the input tape, and reads symbol $ \gamma \in \tildegamma $ on the work tape, it follows the transition
\begin{equation}
	\label{eq:delta}
	\delta(s,\sigma,\gamma) = (s',\gamma',d_i,d_w),
\end{equation}
and then the state becomes $ s' \in S $, $ \gamma' $ is written on the cell under the work head, and then the positions of input and work heads are updated with respect to $ d_i \in \directions $ and $ d_w \in \directions $, respectively, where ``$ \leftarrow $'' (``$\downarrow$'' and ``$\rightarrow$'') means the head is moved one cell to the left (the head does not move and the head is moved one cell to the right). The computation starts in state $ s_1 $, and the computation is terminated and the given input is accepted (rejected) if $ D $ enters $ s_a $ ($s_r$). The set of strings accepted by $ D $ form a language, say $ L \subseteq \Sigma^* $, and it is said that $ L $ is recognized by $ D $. 

The space used by $ D $ on a given input is the number of all cells visited on the worktape during the computation. 

If the input head is not allowed to move to the left, then it is called ``one--way'' and then the model is denoted as 1DTM. If we remove the work tape (and all related components in the formal definition and in the transition function) of a DTM/1DTM, we obtain a two--way/one--way deterministic finite automaton (2DFA/1DFA). 

A counter is a special type of memory containing only the integers. Its value is set to zero at the beginning. During the computation, its status (whether its value is zero or not) can be read like reading blank symbol or not on the work tape, and then its value is incremented or decremented by 1 or not changed like the position update of work head. Thus a deterministic counter automaton (2DCA) is a 2DFA with a counter. The space used (on the counter) by a 2DCA is maximum value of the counter during the computation. Remark that the value of the counter can be stored on a binary work tape with logarithmic amount of the space. Moreover, counter can also be seen as a unary work tape with certain specifications.

A probabilistic Turing machine (PTM) is a generalization of a DTM such that it can make random choices according to some probability distributions and so a PTM can do more than one transition in each step. Each choice can be realized only with some probabilities. The number of choices and their realization probabilities are determined by the current state and the symbols read on the tapes, and the summation of the probabilities must be 1 to have a well--formed probabilistic systems. Thus, a PTM can follow different paths during the computation and so the input is accepted with some probabilities. Remark that all probabilistic models in this paper halt either absolutely or with probability 1. In the latter case, we mention about expected running time. Since the space usage of a PTM can be different on the different paths, we take the maximum value.

% VAREPSILON IS FOR EMPTY STRING
The language $L$ is said to be recognized by PTM with error bound $\epsilon$ ($0 \leq \epsilon < 1/2$) if every member of $ L $  is accepted with probability at least $1-\epsilon$ and every non-member of $L$ ($w \notin L$) is accepted with probability not exceeding $\epsilon$.

One-way PTM (1PTM) is defined similar to 1DTM. If we remove the work tape of a PTM/1PTM, we obtain two--way/one--way probabilistic finite automaton (2PFA/1PFA). A probabilistic counter automaton (2PCA) is a 2PFA with a counter. 

An $m$-state ($Q=\{q_1,\ldots,q_m\}$) quantum system forms an $m$-dimensional Hilbert space ($\mathcal{H}^m$), complex vector space with inner product, spanned by the set $ \{ \ket{q_1},\ldots,\ket{q_m} \} $, where $ \ket{q_j} $ is a column vector with zero entries except the $ j $-th entry that is 1. A quantum state of the system is a norm-1 vector in $ \mathcal{H}^m $:
\[
	\ket{v} = \alpha_1 \ket{q_1} + \cdots + \alpha_m \ket{q_m},~~ \sum_{j=1}^m |\alpha_j|^2 = 1,
\]
where $ \alpha_j $ is a complex number and represents the amplitude of the system being in $ \ket{q_j} $, and the probability of system being in $ \ket{q_j} $ is given by  $ |\alpha_j|^2 $.

The quantum system evolves by unitary operators, also known as norm preserving operators, represented by unitary matrices. Let $ U $ be a unitary operator (matrix). Then its $ (l,j) $-th entry represents the transition amplitude from $ \ket{q_j} $ to $ \ket{q_l} $, where $ 1 \leq j,l \leq n $. After applying $ U $, the new state is
\[
	v' = U v = \alpha'_1 \ket{q_1} + \cdots + \alpha'_m \ket{q_m},~~ \sum_{j=1}^m |\alpha'_j|^2 = 1.
\]

In order to retrieve information from the system, measurement operators are applied. We use a simple one called projective measurement,  say $P$. Formally $P$ is composed by $ k \geq 1 $ elements $ \{P_1,\ldots,P_k\} $. Each $ P_i $ is a zero-one diagonal (nonzero) matrix and $ P_1 + \cdots + P_k = I $. So $P$ is designed to decompose $ \mathcal{H}^m $ into $ k $ orthogonal subspaces and $ P_j $ projects any vector to its subspace, where $ 1 \leq j \leq k $. After applying $P$ to the system when in $ \ket{v} $, the system collapses to one of the subspaces and so the new quantum state lies only in this subspace. The vector
\[
	\ket{\widetilde{v_j}} = P_j \ket{v}
\] 
is the projection of the quantum state to the $j$-th space and so the probability of observing the system in this subspace is given by $ p_j  = || \ket{ \widetilde{v_j} } ||^2 $. If this happens ($p_j >0$), the new quantum state is
\[
	\ket{v_j} = \frac{\ket{\widetilde{v_j}}}{\sqrt{p_j}}.
\]
The vector $ \ket{\widetilde{v_j}} $ is called unnormalized state vector and tracing quantum systems by such vectors can make the calculations simpler. (Keeping the conditional probabilities with normalized states may make the calculations harder to follow.)

Now we give the definition of two--way quantum finite automaton with classical head, known as two--way finite automaton with quantum and classical states (2QCFA) \cite{AW02}, which can use unitary operators and projective measurements on the quantum part. Formally, a 2QCFA $ M $ is a 8-tuple
\[
	M = ( S,Q,\Sigma,\delta,s_1,q_1,s_a,s_r ),
\]
where, different from a classical model, $Q$ is the set of quantum states, $q_1$ is the initial state, and the transition function $ \delta $ is composed by $ \delta_q $ governing quantum part and $ \delta_c $ governing the classical part. The computation is governed classically. At the beginning of the computation, the classical part is initialized and the state of quantum part is set to $ \ket{q_1} $. In each step, the current classical state and scanned symbol determines a quantum operator, either a unitary operator or a projective measurement, that is applied to the quantum register. After getting the new quantum state, the classical part is updated. If the quantum operator is unitary, then classical part is updated like a 2DFA. If the quantum operator is a measurement, then the outcome is processed classically, that is, the next state and head movement is determined by the current classical state, the measurement outcome, and the scanned symbol. When entering $ s_a $ ($s_r$), the computation halts and the input is accepted (rejected). 

A (strict) realtime version of 2QCFA (rtQCFA) \cite{ZQLG12} moves its head one square to the right in each step, halts the computation after the reading the right end-marker, and applies one unitary operator and then measurement operator for the quantum part in each step. 

A 2QCFA with counter (2QCCA) is a 2QCFA augmented with a classical counter, where the classical part can access a counter.

A two--way model is called sweeping if the direction of the head can be changed only on the end-markers. So, the input is read from left to the right, then right to left, and then left to right, and so on. A very restricted version of sweeping models are realtime restarting models (see \cite{YS10B,YS11D} for the details of restarting concept): the models have an additional state $ s_i $ such that immediately after entering $s_i$ the overall computation is terminated and all computations start from the initial configuration. In this paper, we focus on restarting rtQCFAs.

We denote the set of integers $ \mathbb{Z} $ and the set of positive integers $ \mathbb{Z}^+ $. The set $ \cal I $ is the set of all subsets of $ \mathbb{Z^+} $:
\[
	\mathcal{I} = \{ I \mid I \subseteq \mathbb{Z^+} \}.
\]
Remark that the cardinality of $ \mathbb{Z} $ or $ \mathbb{Z^+} $ is $ \aleph_0 $ (countably many) and the cardinality of $ \mathcal{I} $ is $ \aleph_1 $ (uncountably many) like the set of real numbers ($ \mathbb{R} $). The membership of each positive integer in any $ I \in \mathcal{I} $ can be represented as a binary probability value:
\[
	p_I = 0.x_1 0 1 x_2 0 1 x_3 0 1 \cdots x_i 0 1 \cdots,~~~~ x_i = 1 \leftrightarrow i \in I.
\]
Quantumly, we use a different technique, originally given in \cite{ADH97}. The membership of each positive integer in any $ I \in \mathcal{I} $ can be represented as a single rotation on $ \mathbb{R}^2 $ with the angle:
\[
	\theta_I = 2 \pi \sum_{i=1}^\infty \left( \frac{x_i}{8^{i+1}} \right) ,~~~~ \begin{array}{lrl} x_i = & 1, & \mbox{if } i \in I \\ x_i =  & -1, & \mbox{if } i \notin I	
\end{array}	 .
\]

\section{Main Results}
\label{sec:main-results}

We start with PTMs. Then we focus on 2PCAs and PTMs with restricted head movements. Lastly, we present our quantum results.

\subsection{Probabilistic Turing machines}
\label{sec:PTM}

It is known that polynomial--time PTMs can use uncomputable transition probabilities to recognize uncountably many languages with bounded error \cite{SayY14C}. The $k$-th bit in the decimal expansion of the probability that a given biased coin will land heads can be estimated by a procedure that involves tossing that coin for a number of times that is exponential in $k$. Given any unary language $L$ on the alphabet $ \{a\}$, and a coin which lands heads with probability $0.x$, where $x$ is an infinite sequence of digits whose $k$-th member encodes whether the $k$-th unary string is in $L$, the language $\{a^{4^k}|a^k \in L\}$ is recognized by PTM with bounded error. The machine in this construction uses logarithmic space.

In this section, we improve this result and show that bounded-error probabilistic models can recognize uncountably many languages with less resources. We start with a technical lemma.

\begin{lemma}
	\label{lem:64k}
	Let $ x = x_1 x_2 x_3 \cdots $ be an infinite binary sequence. If a biased coin lands on head with probability  $p=0.x_101x_201x_301...$, then the value $ x_k $ can be determined with probability $ \frac{3}{4} $ after $64^k$ coin tosses.
\end{lemma}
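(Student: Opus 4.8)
The plan is to estimate $p$ by the empirical head frequency $\hat p$ obtained from $64^k$ tosses and then read off the binary digit of $\hat p$ that occupies the position of $x_k$ inside $p$. First I set up the geometry. In the binary expansion $p=0.x_1\,0\,1\,x_2\,0\,1\,x_3\cdots$ the digit $x_k$ sits in position $3k-2$, so if $q:=\{2^{3k-3}p\}$ denotes the fractional part, then $q=0.x_k\,0\,1\,x_{k+1}\,0\,1\,x_{k+2}\cdots$ in binary. Summing the appropriate geometric series over the two extreme continuations (all remaining $x_i$ equal to $0$, resp.\ all equal to $1$) shows that $x_k=0$ forces $q\in[1/7,3/14]$ while $x_k=1$ forces $q\in[9/14,5/7]$. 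The whole point of the padding $01$ is exactly this: in either case $q$ is bounded away from $0$, from $\tfrac12$, and from $1$.

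Next comes the procedure: toss the coin $N=64^k$ times, let $\hat p$ be the fraction of heads, and output $0$ if $\{2^{3k-3}\hat p\}<\tfrac12$ and output $1$ otherwise (equivalently, report the $(3k-2)$-th binary digit of the dyadic rational $\hat p$). I then argue that this output equals $x_k$ whenever $|\hat p-p|<8^{-k}$. In that event $2^{3k-3}|\hat p-p|<8^{k-1}\cdot 8^{-k}=\tfrac18$, which is strictly less than $1/7$ (the least possible value of $q$) and strictly less than $1-5/7$ (one minus the greatest possible value of $q$); hence $\lfloor 2^{3k-3}\hat p\rfloor=\lfloor 2^{3k-3}p\rfloor$, i.e.\ no carry occurs, and $\{2^{3k-3}\hat p\}$ lies within $\tfrac18$ of $q$. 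Therefore $x_k=0$ yields $\{2^{3k-3}\hat p\}<\tfrac{3}{14}+\tfrac18<\tfrac12$, and $x_k=1$ yields $\{2^{3k-3}\hat p\}>\tfrac{9}{14}-\tfrac18>\tfrac12$, so the decision is correct (and never lands exactly on $\tfrac12$).

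It remains to control the deviation probability. Since $\hat p$ is the average of $N=64^k$ i.i.d.\ Bernoulli($p$) trials, $\mathrm{Var}(\hat p)=p(1-p)/N\le 1/(4N)$, so Chebyshev's inequality gives $\Pr[\,|\hat p-p|\ge 8^{-k}\,]\le 1/(4N\cdot 8^{-2k})=1/4$, using $64^k=8^{2k}$. Combining this with the previous paragraph, the procedure outputs $x_k$ with probability at least $3/4$ after $64^k$ tosses. (A Hoeffding/Chernoff bound works just as well and leaves a comfortable margin.)

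The only delicate point — and the step I would check most carefully — is the simultaneous fit of the constants: the deviation radius $8^{-k}$ must be small enough both to prevent a carry into position $3k-3$ and to keep $\{2^{3k-3}\hat p\}$ on the correct side of $\tfrac12$ (this is where the explicit intervals $[1/7,3/14]$ and $[9/14,5/7]$ are needed, hence where the padding is used), while still being large enough that Chebyshev's bound yields exactly $1/4$. These three requirements meet precisely, which is why $64^k$ — rather than a smaller power with a worse constant — is the right number of tosses.
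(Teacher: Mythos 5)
Your proof is correct and follows essentially the same route as the paper's: estimate $p$ from the $64^k$ tosses, use the ``$01$'' padding to guarantee that a deviation of at most $8^{-k}$ in $\hat p$ (equivalently $8^k$ in the head count) cannot flip the bit $x_k$, and bound the deviation probability by $1/4$ via Chebyshev with $\mathrm{Var}=p(1-p)/64^k$. The only differences are cosmetic: you work with the empirical frequency $\hat p$ rather than the head count $X$, and you make the no-carry step explicit with the intervals $[1/7,3/14]$ and $[9/14,5/7]$, which the paper argues more informally (and you also avoid the paper's sign typo $p(p-1)$ for the variance factor).
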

\begin{proof}
	Let $X$ be the random variable denoting the number of heads after $64^k$ coin flips. The expected value of $X$ $E[X]=p*64^k$. The value of $x_k$ is equal to $(3*k-2)$-th bit in $E[X]$. 
	
	If $|X-E[X]| \leq 8^k$ we still have the correct $x_k$, because $E[X]=x_101x_201x_301...x_k01...$ and is followed by $3k$ bits after $x_k01$, and: if we add up to $8^k$ to this number, we get at most $x_101x_201x_301...x_k10...$, followed by $3k$ bits after $x_k10$; if we subtract up to $8^k$ from this number, we get at least $x_101x_201x_301...x_k00...$, followed by $3k$ bits after $x_k00$. In both mentioned cases the bit $x_k$ remains unchanged. This means that the probability of error does not exceed $Pr[|X-E[X]| \geq 8^k]$. By Chebyshev’s inequality we can get that 
	\[
		Pr[|X-E[X]| \geq 8^k] \leq \frac{p*(p-1)*64^k}{(8^k)^2}=\frac{p*(p-1)*64^k}{64^k}=p*(p-1).
	\]
	Therefore, the probability of an error is at most $p*(p-1)$. Function $p*(p-1)$ is parabolic function and its global maximum is $\frac{1}{4}$. That is, $ p*(p-1) \leq \frac{1}{4} $ for any chosen probability $p$. Therefore, the procedure gives the correct answer with the probability at least $\frac{3}{4}$.
\end{proof}

Now, we show that $ O(\log \log n) $ space is enough to recognize uncountably many languages.

\begin{theorem}
	\label{thm:poly-unary-PTM}
	Polynomial--time bounded--error unary PTMs can recognize uncountably many languages in $ O(\log \log n) $ space.
\end{theorem}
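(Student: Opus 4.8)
The plan is to assign to each $ I \subseteq \mathbb{Z}^+ $ a ``blocky'' unary language $ L_I $ whose membership test reads off one bit $ x_k $ of the biased coin of Lemma~\ref{lem:64k}, with the index $ k $ depending on the input length so that $ I \mapsto L_I $ is injective. The machine for $ L_I $ uses a coin that lands heads with probability $ p = 0.x_1 0 1 x_2 0 1 x_3 0 1\cdots $, $ x_i = 1 \Leftrightarrow i \in I $; on input $ a^n $ it (i) computes an index $ k=k(n) $, (ii) tosses the coin exactly $ 64^{k} $ times and counts heads, (iii) reads the bit of that count prescribed by Lemma~\ref{lem:64k}, and (iv) accepts iff that bit is $ 1 $. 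Steps (i) and (iv) are deterministic, so the error is the $ \tfrac14 $ of the lemma, which a constant number of independent repetitions (with $ O(1) $ extra counters) drives below any $ \epsilon>0 $.

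The space constraint dictates the choice of $ k(n) $: the head count, and hence every counter in (ii)--(iv), must fit in $ O(\log\log n) $ bits, so $ 64^{k(n)} $ must be only $ \mathrm{poly}(\log n) $, i.e.\ $ k(n)=O(\log\log n) $. The obvious candidate $ k(n)=\lfloor\log_2\log_2 n\rfloor $ is not obviously usable, since even deciding ``$ n\ge 2^m $'' for $ m=\Theta(\log n) $ naively costs $ \Omega(\log n) $ space. I would instead use a number-theoretic index built from \emph{distinct} small primes: let $ k'(n) $ be the largest $ j $ with $ p_1p_2\cdots p_j\mid n $ ($ p_i $ the $ i $-th prime), and set $ k(n)=\max\{1,\lfloor\log_2 k'(n)\rfloor\} $. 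Since $ p_1\cdots p_j=e^{(1+o(1))p_j} $ by Chebyshev's bound, $ p_1\cdots p_j\mid n $ forces $ j=O(\log n/\log\log n) $, so $ k'(n)=O(\log n/\log\log n) $, hence $ k(n)\le\log_2\log n+O(1) $ and $ 64^{k(n)}=O((\log n)^{6}) $, as required. Moreover $ k'(p_1\cdots p_{2^i})=2^i $ exactly (because $ p_{2^i+1} $ does not divide $ p_1\cdots p_{2^i} $), so $ k(\cdot) $ attains every value in $ \mathbb{Z}^+ $; thus $ L_I=L_{I'} $ forces $ I=I' $, and there are uncountably many such languages.

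The step I expect to be the real work --- and the reason distinct primes rather than, say, powers of $ 2 $ are used --- is computing $ k'(n) $ in $ O(\log\log n) $ space and polynomial time on the read-only unary tape. I would have the machine run $ j=1,2,\dots $: at stage $ j $ it regenerates $ p_j $ by trial division (a $ \mathrm{poly}(\log n) $-time, $ O(\log\log n) $-space subroutine, since $ p_j=O(\log n) $) and makes one sweep of the input keeping a counter modulo $ p_j $ to decide $ p_j\mid n $; it halts at the first $ j $ with $ p_j\nmid n $, outputting $ k'(n)=j-1 $. Every quantity kept --- the stage index, the current prime and its trial-division counter, the mod-$ p_j $ counter, and later the toss counter and head counter, all bounded by $ \mathrm{poly}(\log n) $ --- uses $ O(\log\log n) $ bits; the running time is $ O(\log n/\log\log n) $ sweeps plus $ \mathrm{poly}(\log n) $ overhead, which is polynomial. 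Finally, Lemma~\ref{lem:64k} applied with parameter $ k(n) $ guarantees that the head count is within $ 8^{k(n)} $ of its mean with probability $ \tfrac34 $, so the prescribed bit equals $ x_{k(n)} $ with probability $ \tfrac34 $, and the machine recognizes $ L_I $ with bounded error in $ O(\log\log n) $ space.
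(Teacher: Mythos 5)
Your proposal is correct, and it follows the same two-layer skeleton as the paper --- compute from the input length an index $k(n)=O(\log\log n)$ using divisibility tests that sweep the unary input with small counters, toss the biased coin $64^{k(n)}$ times, and read off $x_{k(n)}$ via Lemma~\ref{lem:64k} --- but the number-theoretic gadget producing the index is genuinely different. The paper uses the Alt--Mehlhorn function $F(n)=\min\{i : i\nmid n\}$, which satisfies $F(n)<c\log n$, and restricts attention to the promise language $\mathtt{AM75'}=\{a^n : F(n)=64^m\}$; inputs violating the promise are rejected deterministically, the value $64^m=F(n)$ is already sitting on the work tape as the number of tosses, and surjectivity of $n\mapsto m$ is inherited from the known nonregularity argument for $\mathtt{AM75}$. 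Your index $k'(n)=\max\{j : p_1\cdots p_j\mid n\}$ is total, so you need no promise sublanguage and no deterministic rejection phase; surjectivity of $k(\cdot)$ onto $\mathbb{Z}^+$ is immediate from $n=p_1\cdots p_{2^i}$, and the bound $64^{k(n)}=O((\log n)^6)$ follows from the primorial growth estimate rather than from the cited bound $F(n)<c\log n$. The costs are roughly symmetric: you must generate the primes $p_j=O(\log n)$ by trial division and invoke a Chebyshev-type bound, where the paper instead leans on the known $O(\log\log n)$-space algorithm for $\mathtt{AM75}$; your construction is a little more self-contained, the paper's reuses an off-the-shelf language. One small point worth making explicit in your writeup: with the bit read correctly only with probability $3/4$, a single run gives error $1/4$, so the constant-repetition majority vote you mention is actually needed to reach an arbitrary error bound $\epsilon$, exactly as in the paper's remark about executing the probabilistic procedure a few more times.
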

\begin{proof}
	For our purpose, we define languages based on the following unary language given by Alt and Mehlhorn in 1975 \cite{AM75}:
	\[
		\mathtt{AM75}=\{a^n \mid n>0 \mbox{ and } F(n) \mbox{ is a power of } 2\},
	\]
	 where $F(n)=min\{i \mid i \mbox{ does not divide } n \} \in \{2,3,4,\ldots\}$. It is known that $O(\log \log n)$-space DTMs can recognize $ \mathtt{AM75} $.  It is clear that the following language
	 \[
	 	\mathtt{AM75'}=\{a^n \mid n>0 \mbox{ and } F(n) \mbox{ is a power of } 64\}
	 \] 
	 can also be recognized by $O(\log \log n)$-space DTMs. For the sake of completeness, we provide the details of the algorithm (see also \cite{Sze94}).
	
	Assume that the input is $w=a^n$ for some $ n > 0 $. In order to check if a number $k$ written in binary on the work tape divides $n$, we can use $ O(\log k) $ space for binary values of $ k $ that form a counter, and then we can check whether $ n \mod k $ is equal to zero or not. In order to compute $F(n)$, we can check each $ k = 1,2,3,\ldots $ in order to determine the first $ k $ such that $ n \mod k \neq 0 $. It is known that (see Lemma 4.1.2(d) in \cite{Sze94}), $F(n)<c*\log n$ for some constant $c$. Therefore, we use $O(\log \log n)$ space to find $ F(n) $. Remark that when the number $F(n)$ is found, it is written on the work tape, and it is easy to check whether this number is a power of 64, i.e., it must start with 1 and should be followed by only zeros and the number of zeros must be a multiple of 6 ($64=2^6$). 
	
	For any $ I \in \mathcal{I} $, we can define a corresponding language:
	\[
		\mathtt{AM75'(I)} = \{ a^n \mid a^n \in \mathtt{AM75'} \mbox{ and } \sigmastar(\log_{64}F(n)) \in I \}.
	\]  
	
	For any input $ a^n $, we can deterministically check whether $ a^n \in \mathtt{AM75'} $ by using the above algorithm. If not, the input is rejected. Otherwise, we continue with a probabilistic procedure. Remark that the work tape can still contain the binary value of $ F(n) $ that is $ 64^m $ for some positive integers $ m $ in the beginning of the probabilistic procedure.
	
	We use a biased coin landing on head with probability $ p_I $ encoding the memberships of positive integers in $I$ as described before.
	
	By definition we know that $ a^n \in \mathtt{AM75'(I)} $ if and only if $ m \in I $. So, if we compute the value of $ x_m $ correctly, we are done. Since the work tape contains the value of $ 64^m $, we can toss this biased coin $64^m$ times and count the number of heads. Due to Lemma \ref{lem:64k}, we know that we can correctly compute $ x_m $ with probability at least $ \frac{3}{4} $. Here the number of heads is kept in binary and we check the $(3m-2)$-th bit of the result after finishing the all coin tosses. By executing the probabilistic procedure a few more times, the success probability can be increased. Remark that the space used on the work tape does not exceed $ O(\log \log n) $ and so the running time is polynomial in $ n $.
	
	The cardinality of the set of all subsets of positive integers is uncountably many and so the cardinality of the following set
	\[
		\left\lbrace \mathtt{AM75'(I)} \mid I \subseteq \mathbb{Z}^+ \right\rbrace
	\]
	is also uncountably many, each element of which is recognized by a polynomial-time bounded-error unary PTM using $ O( \log \log n) $ space.
\end{proof}

With polynomial expected time, we cannot do better since it was proven that polynomial-time PTMs using $o(\log \log n)$ space can recognize only regular languages even with unrestricted transition probabilities \cite{DS90}. 

On the other hand, a well-known fact is that with super-polynomial expected time PTMs can recognize nonregular binary languages even with constant-space \cite{Fre81}. Remark that constant-space unary PTMs can recognize only regular languages \cite{Kan91B} and regarding $ o(\log \log n) $ space, we only know that one-way unary PTMs cannot recognize any nonregular language \cite{KF90}. Currently we leave open whether constant--space PTMs can recognize uncountably many languages and we do not know whether PTMs can recognize a unary nonregular language with $ o(\log \log n) $ space. But we show that PTMs can recognize uncountably many (binary) languages with arbitrary small non-constant space. For this purpose, we use a fact given by Freivalds in \cite{Fre81}. Note that here we use a slightly modified version of the original language given in \cite{Fre81} in order to keep the input alphabet binary.

For any binary language $ L \subseteq \{0,1\}^* $, we define another language $ \mathtt{LOG(L)} $ as follows:
\[
	\mathtt{LOG(L)} = \{ 0 (1 w_1) 0^{2^1} (1 w_2) 0^{2^2} (1 w_3) 0^{2^3} \cdots 0^{2^{m-1}} (1 w_m) 0^{2^m} \mid w = w_1 w_2 \cdots w_m \in L \}.
\]

\begin{fact}
	\label{fact:Fre81}
	\cite{Fre81}
	If a binary language $ L $ is recognized by a bounded-error PTM in space $ s(n) $, then the binary language $ \mathtt{LOG(L)} $ is recognized by a bounded-error PTM in space $ \log(s(n)) $.
\end{fact}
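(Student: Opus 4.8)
The plan is to show how a bounded-error PTM $M$ for $L$ running in space $s(n)$ can be turned into a bounded-error PTM $M'$ for $\mathtt{LOG(L)}$ running in space $O(\log s(n))$. On an input $u$, $M'$ will reconstruct the word $w=w_1w_2\cdots w_m$ that $u$ is meant to encode --- the $w_i$ being exactly the symbols that immediately follow the marker symbols ``$1$'' --- and then simulate $M$ on $w$, accepting iff $M$ accepts. The reason this buys space is purely that a well-formed $u$ has length $|u|=1+\sum_{i=1}^m(2+2^i)=\Theta(2^m)$, so that $m=\Theta(\log|u|)$; hence whatever $M$ needs $s(m)$ work-tape cells to store becomes, as a function of $n=|u|$, of order $s(\Theta(\log n))$, which a routine rescaling (using $m=\Theta(\log n)$ and that $s$ is non-decreasing with $s(n)\le n$, as in all our applications) bounds by $O(\log s(n))$.

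First, $M'$ verifies that $u$ has the prescribed shape: a leading $0$, then alternating ``marker pairs'' $1w_i\in\{10,11\}$ and nonempty all-zero blocks $B_1,B_2,\ldots,B_m$ with $|B_1|=2$ and $|B_{i+1}|=2|B_i|$ for every $i$; otherwise it rejects. The marker pattern and the base case are checked by a finite-state scan. The real content is the doubling test $|B_{i+1}|=2|B_i|$, which is non-regular and is exactly where Freivalds' idea enters: the lengths of two adjacent blocks can be compared probabilistically with bounded error by coordinated back-and-forth random walks that use only $O(1)$ work-tape cells, at the price of super-polynomial expected time. The step I expect to be the main obstacle is making this test work for \emph{all} $\Theta(\log n)$ adjacent pairs simultaneously while keeping the total error below the threshold --- in particular, a well-formed $u$ must survive all $m-1$ comparisons with high probability, which forces each comparison to have one-sided error (a correct pair is never rejected), so that nothing accumulates on well-formed inputs, while any malformed $u$ is caught at its first defective pair with constant probability.

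Once the structural check is passed, $M'$ simulates $M$ on $w$. The position of $M$'s input head among $w_1,\ldots,w_m$ (and on the end-markers $\cent,\dollar$) is kept \emph{physically}: $M'$ parks its own head on the corresponding marker bit (or at the appropriate end of $u$), so this costs no work-tape space and $w_i$ is read off directly, the long zero-blocks serving only as transparent spacers to scan across. A configuration of $M$ on $w$ --- internal state, work-tape contents, and work-head position --- is describable in $O(s(m))$ bits, which $M'$ stores on its work tape and advances one $M$-step at a time; by the rescaling above this is $O(\log s(n))$ cells. (Alternatively, when $s$ is near-linear one may keep $M$'s configuration unarily as a head offset inside the terminal block $0^{2^m}$, using the work tape only for the binary pointer addressing that offset.) When $M$ halts, $M'$ halts in the same way. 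On well-formed inputs $M'$ reproduces $M$'s error $\epsilon$ exactly, and the structural check contributes at most a further constant error; since $M'$ halts with probability $1$, it repeats the whole procedure a constant number of times and answers by majority, which drives the error below $1/2$ while leaving the space at $O(\log s(n))$ and the running time finite (super-polynomial in general).
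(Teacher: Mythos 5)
The paper does not actually prove this Fact---it is quoted from Freivalds \cite{Fre81}---so I am comparing your argument against the construction that citation refers to. Your overall architecture is the intended one: decode $w$ from the bits following the markers, verify the doubling structure of the zero-blocks probabilistically in $O(1)$ extra work space, and simulate $M$ on $w$, whose length $m=\Theta(\log n)$ converts $M$'s space bound into $s(\Theta(\log n))=O(\log s(n))$ for the functions relevant here. You also correctly locate the one delicate step. But your resolution of that step does not work.

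You claim the $\Theta(\log n)$ adjacent-block comparisons can be composed because each comparison can be made one-sided, so that ``a correct pair is never rejected.'' No constant-space probabilistic procedure can achieve this. If such a test rejects a well-formed pair with probability exactly $0$, then no computation path on that input reaches the reject outcome; but whether the reject outcome is \emph{reachable} depends only on the support of the transition function, so the set of inputs on which rejection is reachable is recognized by a two-way NFA and is therefore regular. A useful test must make rejection reachable (indeed likely) on every malformed pair, so this regular set would have to coincide with the non-regular set of malformed inputs---a contradiction. Hence every constant-space length comparison has two-sided error bounded away from $0$, and naive sequential composition rejects a well-formed input with probability $1-(1-\epsilon)^{m-1}\rightarrow 1$. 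The mechanism that actually makes Freivalds' Lemma~2 work (the paper invokes it in the proof of Theorem~\ref{thm:log-DIMA}) is the accept/reject \emph{ratio} calibration, i.e.\ the same restart paradigm used in the paper's rtQCFA proof: in each round the machine rejects if any comparison visibly fails, accepts only on a separately engineered very-low-probability event, and otherwise restarts; the parameters are chosen so that on well-formed inputs the per-round accept probability dominates the total per-round false-reject probability summed over all pairs, while on malformed inputs the reject probability contributed by the first defective pair dominates the accept probability. This bounds the error independently of $m$ and is also exactly why the expected running time becomes super-polynomial. Your proof needs this argument (or a direct appeal to Freivalds' Lemma~2) in place of the one-sidedness claim; the remaining components---the decoding of $w$, the physical tracking of $M$'s input head on the marker bits, and the space accounting---are fine.
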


\begin{theorem}
	\label{thm:log-AM75}
	For any $ I \in \mathcal{I} $, the language $ \mathtt{LOG(AM75'(I))} $ can be recognized by a bounded-error PTM in space $ O(\log \log \log (n)) $. 
\end{theorem}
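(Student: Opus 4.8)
The plan is to derive Theorem~\ref{thm:log-AM75} by composing the two ingredients that are already in place: Theorem~\ref{thm:poly-unary-PTM}, which puts $\mathtt{AM75'(I)}$ into the bounded-error PTM classes at space $O(\log\log n)$, and Fact~\ref{fact:Fre81}, which turns any bounded-error $s(n)$-space PTM recognizer of a binary language into a bounded-error $\log(s(n))$-space recognizer of its $\mathtt{LOG}$-padded version. So the argument will be essentially a two-line reduction, preceded by one bookkeeping remark.

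The bookkeeping remark is that, although $\mathtt{AM75'(I)}$ is defined over the unary alphabet $\{a\}$, we may regard it as a binary language over $\{0,1\}$ by identifying the symbol $a$ with $1$ (so a string $a^n$ becomes $1^n$, and in the $\mathtt{LOG}$ encoding every $w_i$ equals $1$ with $m=n$). The recognizer supplied by Theorem~\ref{thm:poly-unary-PTM} still works verbatim under this identification --- any input symbol different from $1$ simply causes immediate rejection --- so, viewed this way, $\mathtt{AM75'(I)}$ is a binary language recognized by a bounded-error PTM using $s(n)=O(\log\log n)$ space, where $n$ is the length of the underlying (unary) pre-$\mathtt{LOG}$ string. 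This is exactly the hypothesis of Fact~\ref{fact:Fre81}.

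The main step is then to invoke Fact~\ref{fact:Fre81} with $L=\mathtt{AM75'(I)}$ and $s(n)=O(\log\log n)$, which immediately gives a bounded-error PTM recognizing $\mathtt{LOG(AM75'(I))}$ in space $\log(s(n)) = O\!\left(\log(\log\log n)\right) = O(\log\log\log n)$; here, consistently with the statement of the fact, ``$n$'' keeps referring to the index of the underlying unary string, so that the exponential blow-up in the literal input length caused by the $0^{2^i}$ padding is already accounted for by this convention. To record the intended cardinality consequence, I would then note that $I\mapsto\mathtt{AM75'(I)}$ is injective (distinct $I\in\mathcal{I}$ force distinct membership of some unary witness $a^n$, since for every positive integer $j$ one can exhibit $n$ with $F(n)=64^j$, e.g.\ $n=\operatorname{lcm}(1,\dots,64^j-1)$) and that $L\mapsto\mathtt{LOG(L)}$ is injective as well, whence $\{\,\mathtt{LOG(AM75'(I))}\mid I\in\mathcal{I}\,\}$ has cardinality $\aleph_1$, each member being a binary language recognized by a bounded-error PTM in space $O(\log\log\log n)$.

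I do not expect a genuine obstacle here: all the difficulty is packaged inside the two cited results. The only points that need care are the (harmless) reinterpretation of the unary language as a binary one so that Fact~\ref{fact:Fre81} applies, and keeping the meaning of ``$n$'' straight across the $\mathtt{LOG}$ transformation. One could, if desired, unfold Fact~\ref{fact:Fre81} --- Freivalds' device of exploiting the exponentially growing blocks of $0$'s as extra room in which the simulating machine recomputes and addresses positions of the original machine, so that a work-tape location bounded by $s(n)$ is maintained with only $\log s(n)$ bits --- but since the fact is available as a black box this is unnecessary, and one should also observe (without needing it for the statement) that the resulting machine is no longer polynomial time.
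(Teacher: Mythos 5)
Your proposal is correct and follows exactly the paper's own (one-line) argument: apply Fact~\ref{fact:Fre81} to the $O(\log\log n)$-space recognizer of $\mathtt{AM75'(I)}$ from Theorem~\ref{thm:poly-unary-PTM}. The extra bookkeeping you supply (unary-to-binary identification, the convention for $n$ across the $\mathtt{LOG}$ transformation, and the injectivity remark) is harmless elaboration of the same route.
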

\begin{proof}
	It follows from Theorem \ref{thm:poly-unary-PTM} and Fact \ref{fact:Fre81}. 
\end{proof}

Similarly we can follow that the language $ \mathtt{LOG^k(AM75'(I))} $ for $ k>1 $ can be recognized by a bounded-error PTM in space $ O(\log^{k+2}(n)) $. 

\begin{corollary}
	\label{cor:PTM-arbitrary-small}
	The cardinality of languages recognized by bounded-error PTMs with arbitrary small non-constant space bound is uncountably many.
\end{corollary}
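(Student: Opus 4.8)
The plan is to iterate the padding behind Fact~\ref{fact:Fre81}. Theorem~\ref{thm:poly-unary-PTM} supplies, for every $I \in \mathcal{I}$, a bounded--error PTM recognizing $\mathtt{AM75'(I)}$ in space $O(\log\log n)$; Theorem~\ref{thm:log-AM75} lowers this to $O(\log\log\log n)$ for $\mathtt{LOG(AM75'(I))}$; and applying $\mathtt{LOG}$ a total of $k$ times gives, as already remarked, bounded--error PTMs recognizing $\mathtt{LOG^k(AM75'(I))}$ in space $O(\log^{k+2} n)$, the $(k+2)$-times iterated logarithm. Since the functions $n \mapsto \log^{k+2} n$ are all unbounded yet, as $k$ grows, eventually sit below any non--constant space bound that dominates a finite tower of logarithms, it suffices to observe that for each fixed $k$ the family $\{\,\mathtt{LOG^k(AM75'(I))} \mid I \in \mathcal{I}\,\}$ lies in space $O(\log^{k+2} n)$ and has the cardinality of $\mathcal{I}$; letting $k$ range over $\mathbb{Z}^+$ then yields the statement.

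I would carry this out in three steps. First, invoke the iterated form of Fact~\ref{fact:Fre81} (the remark following Theorem~\ref{thm:log-AM75}): each application of $\mathtt{LOG}$ blows a length-$m$ string up to length $\Theta(2^m)$, so after $k$ applications the input length is a $k$-fold tower and the space, read off in that length, is a $k$-fold iterated logarithm of it. Second, verify that uncountability survives, i.e. that $I \mapsto \mathtt{LOG^k(AM75'(I))}$ is injective for each $k$. Decoding is immediate from the rigid shape $0(1w_1)0^{2^1}(1w_2)0^{2^2}\cdots(1w_m)0^{2^m}$, so it is enough that $\log_{64} F(n)$, as $a^n$ ranges over $\mathtt{AM75'}$, exhausts $\mathbb{Z}^+$: for each $m \ge 1$ take $n = \mathrm{lcm}(1,2,\dots,64^{m}-1)$; the $2$-adic valuation of this $n$ is $\lfloor \log_2(64^{m}-1) \rfloor = 6m-1 < 6m$, so $64^{m} \nmid n$ while $1,\dots,64^{m}-1$ all divide $n$, hence $F(n) = 64^{m}$, $a^n \in \mathtt{AM75'}$, and $\log_{64} F(n) = m$. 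Thus every $I$ is recoverable from $\mathtt{LOG^k(AM75'(I))}$, so distinct $I$'s give distinct languages. Third, conclude: $\{\,\mathtt{LOG^k(AM75'(I))}\mid I\in\mathcal{I}\,\}$ is an uncountable set of languages recognized by bounded--error PTMs in space $O(\log^{k+2}n)$, and $k$ is arbitrary.

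The only real obstacle is settling what ``arbitrary small non--constant space bound'' should mean and keeping the iterated space bookkeeping honest. For the first, the safe calibrated reading is the one above, and I would phrase the corollary as: for every $k$ there is an uncountable class in space $O(\log^{k+2} n)$, so no non--constant bound lying above some finite tower of logarithms is optimal; one could try to push further by layering the $\mathtt{LOG^k}$ constructions of different $k$ into a single language whose level, hence whose space usage, shrinks with $n$, but that is not needed for the corollary and runs into the (here still open) question of what constant--space PTMs can do. For the second, the sole delicacy in the $\mathtt{LOG^k}$ recognizer is traversing the geometric blocks $0^{2^i}$ within iterated--logarithmic space; this is exactly the content of the Freivalds construction underlying Fact~\ref{fact:Fre81}, which we use as a black box, the point being that consecutive block lengths can be compared (each is twice the previous) without ever writing a whole block length in binary.
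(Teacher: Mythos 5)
Your argument is essentially the paper's: the corollary is stated there without a separate proof, as an immediate consequence of Theorem~\ref{thm:log-AM75} and the preceding remark that $\mathtt{LOG^k(AM75'(I))}$ is recognized by a bounded-error PTM in space $O(\log^{k+2} n)$, which is exactly the route you take. Your added check that $\log_{64}F(n)$ ranges over all of $\mathbb{Z}^+$ (so that distinct $I$ give distinct languages, via $n=\mathrm{lcm}(1,\dots,64^m-1)$) is a correct detail the paper leaves implicit.
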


\subsection{Probabilistic counter machines}
\label{sec:PCA}

In this section, we present some results for 2PCAs. Remark that any $ s(n) $-space counter can be simulated by $\log(s(n))$-space work tape.

It is easy for a 2PCAs to check whether any specific part of the input has length of $ 64^k $ for some $ k>0 $, and so, they can easily toss a biased coin for $ 64^k $ times and then count the number of heads on the counter. However, it is not trivial to read some certain digits of the result on the counter and so we use a clever trick here.

\begin{theorem}
	\label{thm:linear-2PCA}
	Bounded-error linear-time (linear-space) 2PCAs can recognize uncountably many languages.
\end{theorem}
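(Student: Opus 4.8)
The plan is to exhibit, for each $ I \in \mathcal{I} $, a binary language $ L_I $ and a bounded-error 2PCA $ M_I $ that recognizes $ L_I $ in linear time and linear space, and then to observe that $ I \mapsto L_I $ is injective, so that $ \{ L_I \mid I \in \mathcal{I} \} $ is uncountable. The machine $ M_I $ will use the biased coin that lands on head with probability $ p_I = 0.x_1 01 x_2 01 x_3 01 \cdots $, where $ x_k = 1 $ exactly when $ k \in I $, as in Lemma~\ref{lem:64k}. A valid word of $ L_I $ will encode an integer $ k \ge 1 $ in a shape the machine can check: it carries a block $ B $ of length $ 8^k $ — the power-of-$ 8 $ structure being verifiable because the word also lists the geometric chain of blocks of lengths $ 8^1, 8^2, \ldots, 8^k $, each $ 8 $ times the previous, a relation a 2PCA can test in linear time with its two-way head and counter — followed by a padding/clock block $ C $ of length $ |B|^2 = 64^k $, whose length is checked by counting on the counter how many copies of $ B $ fit into $ C $. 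In particular such a word has length $ n = \Theta(64^k) $, and we put it into $ L_I $ if and only if $ k \in I $; since it is the only candidate member of $ L_I $ encoding $ k $, distinct $ I $'s give distinct languages.

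On input $ w $, $ M_I $ would first deterministically verify the shape above and reject otherwise; by the discussion preceding the theorem this costs linear time with the counter bounded by $ O(n) $. If the check passes, $ M_I $ walks once across $ C $, tossing the biased coin at each of its $ 64^k $ cells and keeping the number $ X $ of heads on the counter. Since $ 0 \le X \le 64^k = \Theta(n) $, only linear space is used and this phase runs in linear time.

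The remaining step is the \emph{clever trick}: recovering $ x_k $ from the number $ X $ sitting on the counter, which a single counter cannot inspect digit-by-digit. By Lemma~\ref{lem:64k}, with probability at least $ \frac{3}{4} $ we have $ x_k = \lfloor X / 2^{3k+2} \rfloor \bmod 2 $. The idea is that $ 2^{3k+2} = 4 \cdot 8^k = 4\,|B| $ is a constant multiple of a length physically present on the tape, so $ M_I $ can repeatedly subtract $ 4|B| $ from the counter — one subtraction being four left-to-right sweeps of $ B $, decrementing the counter once per cell — flipping a single control bit $ b $ (initially $ 0 $) after each completed subtraction and stopping as soon as a decrement is attempted with the counter at $ 0 $. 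A short case analysis shows the final $ b $ equals $ \lfloor X / 2^{3k+2} \rfloor \bmod 2 $, and $ M_I $ accepts iff $ b = 1 $. The number of subtractions is at most $ 64^k / (4 \cdot 8^k) = 8^k/4 $, each costing $ O(8^k) $ steps, so this phase is again linear time. Thus $ M_I $ recognizes $ L_I $ in linear time and space with error at most $ \frac{1}{4} $: a word encoding $ k \in I $ is accepted with probability $ \ge \frac{3}{4} $, a word encoding $ k \notin I $ with probability $ \le \frac{1}{4} $, and ill-formed words are rejected outright; repeating the probabilistic phase a constant number of times pushes the error below any desired $ \epsilon $, still in linear time. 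Since $ \mathcal{I} $ is uncountable, this finishes the argument.

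The hard part will be this read-out step, and the whole construction is arranged around making it possible with one counter: one must route the needed digit of $ X $ through a quantity available on the tape, which works here because the relevant divisor $ 2^{3k+2} $ is a constant times $ |B| = 8^k $, turning ``divide and take parity'' into repeated subtraction-by-traversal with a parity flag. The only other points needing (routine) care are verifying that this underflow-terminated subtraction returns $ \lfloor X/2^{3k+2}\rfloor \bmod 2 $ exactly rather than off by one, and checking that the shape and padding conventions genuinely force $ n = \Theta(64^k) $, which is what keeps the time and space bounds linear.
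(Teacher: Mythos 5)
Your proposal is correct and follows essentially the same route as the paper: its language $\mathtt{DIMA(I)}$ is likewise a geometric chain of blocks ending in one of length $64^k$, with a marked block of length $2^{3k+2}$ used as a physical ruler to extract the bit $a_{3k+2}$ of the heads-count by repeated subtraction against the counter (the paper reads the answer off the head's position within the final loop iteration rather than via a parity flag, but this is the same trick). The one detail worth tightening is your certification that the padding block has length exactly $64^k$: the ``count how many copies of $B$ fit into $C$'' check competes with the counter's other use, and the paper sidesteps it by simply continuing the doubling chain all the way up to $2^{6k}$ so that only pairwise doubling checks are ever needed.
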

\begin{proof}
	We start with the definition of a new language:
	\[
		\mathtt{DIMA} = \{ 0^{2^0}10^{2^1}10^{2^2}1  \cdots 1 0^{2^{3k+1}}110^{2^{3k+2}}110^{2^{3k+3}}1 \cdots 10^{2^{6k}} \mid k > 0 \}.
	\]
	Remark that each member is composed by $ (6k+1) $ zero-blocks separated by single 1s except two special separators 11 that are used as the marker to indicate the $ (3k+3) $-th block, the length of which is $ 2^{3k+2} $.
	
	The language $ \mathtt{DIMA} $ can be recognized by a 2DCA, say $D$. First it checks that the input starts with a single 0 and then ends with some 0s, all separators are 1s except two of them, which are 11 and consecutive, and the number of zero-blocks is $ 6k+1 $ for some $k>0$. For all these checks, $ D $ can use only its internal states. And then by using its counter it can check the length of each zero-block (except the first one) is double of the length of previous block. Similarly, it can check the equality of the number of zero-blocks before the first ``11'' and the number of zero-blocks after the first ``11'' plus 3, i.e. $ 3k+2 $ versus $ (3k-1) + 3 $. If one of these checks fails, then the input is rejected immediately. Otherwise, it is accepted. Remark that $ D $ can finish its computation in linear time and the counter value never exceeds the input length.
 	 
	For any $ I \in \mathcal{I} $, we define a new corresponding language:
	\[
		\mathtt{DIMA(I)} = \{ w \in \{0,1\}^*10^{m} \mid m>0, w \in \mathtt{DIMA},  \mbox{ and } \sigmastar(\log_{64}m) \in I \}.
	\]
	For any such $ I $, we can construct a 2PCA recognizing $ \mathtt{DIMA(I)} $, say $P_I$, as desired. The machine $ P_I $ checks whether any given input, say $w$, is in $ \mathtt{DIMA} $ deterministically by using $D$. If the input is not rejected by $D$, we continue with a probabilistic procedure. Since the last zero-block has the length of $ m = 64^{k} $, by reading this block $ P_I $ can toss $ 64^{k} $ biased coins that land on head with probability $ p_I $. The number of heads are counted on the counter. Similar to the proof of Theorem \ref{thm:poly-unary-PTM}, the only remaining task is to determine the $ (3k-2) $-th bit of the binary value of the counter, which is $x_k$. The bit $x_k$ in $E[X]=p_I*64^k$ is followed by $3k+2$ bits. 
	
	The number of heads on the counter, say $C$, can be written as a binary number as follows:
	\[
		C = \sum_{i=0}^{6k} a_i 2^{i} = a_{6k}2^{6k} + \cdots + a_{3k+2} 2^{3k+2} +  a_{3k+1} 2^{3k+1} + \cdots + a_1 2 + a_0,
	\]
	where each $ a_i \in \{0,1\} $. Remark that $ x_k = a_{3k+2} $, i.e. $ 3k+2 = 6k - (3k-1) + 1 $. We can rewrite $ C $ as
	\[
		C = B_1 2^{3k+3} +  a_{3k+2} 2^{3k+2} + B_0 = B_1 2^{3k+3} + C',
	\]
	where $ B_0 $ and $ B_1 $ are integers, $ B_0 < 2^{3k+2} $, and $ C' = a_{3k+2} 2^{3k+2} + B_0 $.
	
	After tossing-coin part, $ P_I $ moves its head to the second symbol of the first ``11'' and then the automaton enters a loop. In each iteration, the head moves to next separator on the right by reading $ 2^{3k+2} $ $0$s and then comes back by reading the same amount of 0s. In each iteration, $ P_I $ tries to subtract $ 2^{3k+2} $ twice ($ 2^{3k+3} $). 
	
	If $ C'=0 $, then $ P_I $ hits to the zero value on the counter when the head is at the starting position of the loop. This means $ a_{3k+2} = x_k = 0 $ and so the input is rejected by the automaton $ P_I $. If $ C' \neq 0 $, then $ P_I $ hits to the zero value on the counter (at some $j$-th iteration, $ j=0,1,\ldots) $ when the head is not at the starting position of the loop. It is clear that the value of counter is $ C' $ before starting the $j$-th iteration. Now, we have two cases, $ x_k = 1 $ or $ x_k = 0 $. If $ x_k = 1 $, $ P_I $ hits to the zero value on the counter only after finishing to read the first  $ 2^{3k+2} $ $0$s. In this case, the input is accepted. Otherwise, $ P_I $ hits to the zero value on the counter before finishing to read the first  $ 2^{3k+2} $ $0$s. Then, the input is rejected.
	
	It is clear that the value of counter never exceeds length of the input. Moreover, both deterministic and probabilistic parts finish in linear time.
\end{proof}

By relaxing the linear-time, we follow similar results for arbitrary small non-constant space on the counter like PTMs.

\begin{theorem}
	\label{thm:log-DIMA}
	For any $ I \subseteq \mathcal{I} $, the language $ \mathtt{LOG(DIMA(I))} $ can be recognized by a bounded-error 2PCA that uses $ O(\log (n)) $ space on the counter. 
\end{theorem}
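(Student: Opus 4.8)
The plan is to construct a bounded-error 2PCA $\tilde P$ that, on an input $y$, simulates the linear-time 2PCA $P_I$ of Theorem~\ref{thm:linear-2PCA} on the word $w$ that $y$ encodes, where a well-formed encoding has the shape $y = 0\,(1w_1)\,0^{2^1}\,(1w_2)\,0^{2^2}\cdots(1w_M)\,0^{2^M}$ with $w=w_1\cdots w_M$. Since $|y|=\Theta(2^M)$ we have $M=O(\log n)$; and since $P_I$ runs in linear time and its counter value never exceeds $|w|=M$ (it tosses $64^k=\Theta(|w|)$ coins and counts heads, and all of the deterministic length checks of $D$ stay below $|w|$), a machine that keeps $P_I$'s counter on its own counter uses only $O(\log n)$ space on the counter, which is exactly the claimed bound. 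So the whole game is to carry out this simulation with a 2PCA and, within the same resources, to verify that $y$ is a legal encoding.

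First I would have $\tilde P$ scan $y$ and check that it has the $\mathtt{LOG}$-shape: it begins with a single $0$; between consecutive maximal zero-blocks sits a two-symbol pattern $1w_i$ with $w_i\in\{0,1\}$; the computation ends right after the last zero-block. All of this is handled by the finite control. The one delicate point is verifying the doubling relation---that the zero-block following the $i$-th data bit has length $2^{i+1}$, twice the preceding one---for every $i$: one cannot simply count a block, because the last zero-block has length $2^M=\Theta(n)$ and a counter of that value is linear, not logarithmic. Instead I would certify the doubling with Freivalds' probabilistic comparison technique, exactly as it is used in the proof of Fact~\ref{fact:Fre81}: each consecutive pair of blocks is tested with a bounded error, the error for the $i$-th pair is driven down by repetition fast enough that the total error stays bounded, and the repetition counters are realised by bouncing the input head inside the zero-blocks themselves, so that only $O(1)$ extra space is used. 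If any check fails, $\tilde P$ rejects.

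Having certified the shape, $\tilde P$ simulates $P_I$ on the virtual word $w$. The classical state of $\tilde P$ carries that of $P_I$; $P_I$'s input head resting on the symbol $w_i$ of $w$ is represented by $\tilde P$'s head sitting on the data bit of the $i$-th pattern $1w_i$ of $y$; a one-step move of $P_I$'s head from $w_i$ to $w_{i\pm1}$ is realised by $\tilde P$ scanning past the intervening zero-block to the next or previous $1$, the finite control distinguishing a marker $1$ from a data bit equal to $1$ by local inspection (for instance, two adjacent $1$-bits of $w$ appear in $y$ as $1\,1\,0^{2^i}\,1\,1$). Coin tosses, counter increments and decrements, the walk to the two ``$11$'' markers inside $w$, and the bouncing loop used by $P_I$ to extract the bit $x_k$ are all reproduced step for step, with $\tilde P$'s counter mirroring $P_I$'s; when $P_I$ accepts or rejects, so does $\tilde P$.

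Correctness then follows: if $y\in\mathtt{LOG(DIMA(I))}$ the shape check passes with high probability and the simulation of $P_I$ on $w\in\mathtt{DIMA(I)}$ accepts with high probability, while if $y\notin\mathtt{LOG(DIMA(I))}$ then either $y$ is malformed (caught by the shape check) or $y=\mathtt{LOG}(w)$ with $w\notin\mathtt{DIMA(I)}$ (caught by the simulation of $P_I$); a constant number of repetitions makes the error as small as desired. Throughout, the only unbounded quantity on $\tilde P$'s counter is a copy of $P_I$'s counter, whose value is at most $|w|=M=O(\log n)$, with the navigation and the doubling test adding only $O(1)$. I expect the main obstacle to be precisely the doubling test: making it work within $O(1)$ space (so as not to spoil the $O(\log n)$ counter budget) and with bounded total error over unboundedly many block pairs is exactly the part that forces Freivalds' technique rather than a naive count, and it is where the argument needs the most care.
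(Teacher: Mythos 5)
Your proposal is correct and follows essentially the same route as the paper: a deterministic check of the outer $(0^+1\{0,1\}^*)^+0^+$ shape, Freivalds' probabilistic comparison (rather than counting) to certify the doubling of the zero-blocks without spending counter space, and then a step-by-step simulation of the linear-time 2PCA of Theorem~\ref{thm:linear-2PCA} on the embedded word $w$ of length $M=O(\log n)$, so that the mirrored counter stays within $O(\log n)$. You correctly identify the doubling test as the one point that forces Freivalds' technique, which is exactly the observation the paper's proof hinges on.
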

\begin{proof}
	Let $ R'_I $ be our desired 2PCA. The definition of $ \mathtt{LOG(DIMA(I))} $ is 
	 \[
		\{ 0 (1 w_1) 0^{2^1} (1 w_2) 0^{2^2} (1 w_3) 0^{2^3} \cdots 0^{2^{m-1}} (1 w_m) 0^{2^m} \mid w = w_1 w_2 \cdots w_m \in \mathtt{DIMA(I)} \}.
	\]
	The automaton can deterministically check the input of the form $ (0^+1\{0,1\}^*)0^+ $. If not, the input is rejected. If so, we can assume that the input is of the form
	\[
		0^+ (1 w_1) 0^{+} (1 w_2) 0^{+} (1 w_3) 0^{+} \cdots 0^{+} (1 w_m) 0^{+}
	\] and the computation continues.
	
	If we are sure that each zero block (except the first one) has double length of the previous zero block, $ R'_I $ executes $ R_I $ on $ w = w_1 w_2 \cdots w_m $ by giving the same answer as $ R_I $ and so we are done. In such a case, $ R_I $ uses linear space on the counter in $ m $, which is logarithm of the input length. 
	
	It is clear that if we use the counter to compare the length of zero blocks in regular way, then the value of counter cannot be sub-linear. On the other hand, as shown by Freivalds \cite{Fre81}, 2PFAs can make such a sequence (unary) equality checks with high probability (see Lemma 2 in \cite{Fre81}). (The only drawback is that 2PFAs require exponential expected time for these checks \cite{DS90}). 
	
	Thus, after the first deterministic check, $ R'_I $ determines the well form of zero blocks with high probability without using its counter. If the zero blocks are well formed, it calls $ R_I $ on $ w $. Otherwise, the input is rejected.
\end{proof}

Similarly we can follow that the language $ \mathtt{LOG^k(DIMA(I))} $ for $ k>1 $ can be recognized by a bounded-error 2PCA that uses $ O(\log^{k}(n)) $ space on the counter. 

\begin{corollary}
	\label{cor:2PCA-arbitrary-small}
	The cardinality of languages recognized by bounded-error 2PCAs with arbitrary small non-constant space bound is uncountably many.
\end{corollary}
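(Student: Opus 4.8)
The plan is to obtain the corollary immediately from Theorem~\ref{thm:log-DIMA} and its $ k $-fold iteration (the remark just preceding the corollary), so that the only new content is a brief cardinality bookkeeping together with a clarification of what ``arbitrary small non-constant space bound'' should mean.

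First I would record that the assignment $ I\mapsto\mathtt{DIMA(I)} $ is injective. Each word of $ \mathtt{DIMA} $ is determined by its single parameter $ k>0 $, and its last zero-block has length $ 2^{6k}=64^{k} $, so $ \log_{64}m=k $ and the $ k $-th word of $ \mathtt{DIMA} $ lies in $ \mathtt{DIMA(I)} $ exactly when $ \sigmastar(k)\in I $. Since $ k\mapsto\sigmastar(k) $ is a bijection from $ \mathbb{Z}^+ $ onto $ \{0,1\}^{*} $, distinct $ I\in\mathcal{I} $ produce distinct languages $ \mathtt{DIMA(I)} $, so $ \{\,\mathtt{DIMA(I)}\mid I\in\mathcal{I}\,\} $ has cardinality $ \aleph_1 $. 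The operator $ \mathtt{LOG}(\cdot) $ is also injective, since $ w=w_1\cdots w_m $ is recovered from $ \mathtt{LOG(L)} $ by deleting all zero-blocks; hence each $ \mathtt{LOG^k} $ is injective and every family $ \{\,\mathtt{LOG^k(DIMA(I))}\mid I\in\mathcal{I}\,\} $ remains uncountable.

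Then I would invoke Theorem~\ref{thm:log-DIMA} for $ k=1 $ and the remark following it for $ k\ge 2 $: for every $ k\ge 1 $ the language $ \mathtt{LOG^k(DIMA(I))} $ is recognized by a bounded-error 2PCA using $ O(\log^{k}(n)) $ counter space, where $ \log^{k} $ denotes the $ k $-fold iterated logarithm. Combined with the injectivity above, this already yields uncountably many languages recognized by bounded-error 2PCAs within $ O(\log^{k}(n)) $ counter space, for every fixed $ k $. Since each $ \log^{k}(n) $ is non-constant (it is unbounded) while $ \log^{k+1}(n)=O(\log(\log^{k}(n))) $ grows strictly more slowly, there is no least non-constant space bound below which the phenomenon disappears: whatever non-constant bound one proposes within this hierarchy, a larger $ k $ stays below it and still captures $ \aleph_1 $ many languages. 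I do not anticipate a genuine obstacle here, since Theorem~\ref{thm:log-DIMA} carries the weight; the only point needing care is phrasing the conclusion precisely, namely that for every $ k\ge 1 $ the class of $ O(\log^{k}n) $-space bounded-error 2PCAs already contains uncountably many languages.
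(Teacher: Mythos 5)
Your proposal is correct and follows essentially the same route as the paper: the corollary is drawn directly from Theorem~\ref{thm:log-DIMA} and the remark on $\mathtt{LOG^k(DIMA(I))}$ using $O(\log^{k}(n))$ counter space, with only the cardinality bookkeeping (injectivity of $I\mapsto\mathtt{DIMA(I)}$ and of $\mathtt{LOG}$) made explicit. Your reading of ``arbitrary small non-constant space bound'' as ``uncountably many languages at every level $O(\log^{k}n)$ of the iterated-logarithm hierarchy'' matches the paper's intended meaning, so no further argument is needed.
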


\subsection{One-way and sweeping probabilistic machines}
\label{sec:1PTM}

Here we present some results by assuming further restrictions.

\begin{theorem}
	\label{thm:1PTM-log}
	Linearithmic--time  bounded--error 1-way unary PTMs can recognize uncountably many languages in $ O( \log n) $ space.
\end{theorem}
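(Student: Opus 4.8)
The plan is to base the language directly on the input \emph{length} rather than on an Alt--Mehlhorn--style function, since a one-way head cannot re-scan the input. For each $I\in\mathcal I$ put
\[
	\mathtt{POW64(I)}=\{\,a^{64^k}\mid k\ge 1 \text{ and } k\in I\,\}.
\]
Distinct $I$ give distinct languages (for $I\neq J$, any $k$ in the symmetric difference makes $a^{64^k}$ a member of exactly one of the two), so $\{\mathtt{POW64(I)}\}_{I\in\mathcal I}$ is uncountable, and it remains to build, for every $I$, a one-way unary PTM $P_I$ recognizing $\mathtt{POW64(I)}$ with bounded error in $O(\log n)$ space and $O(n\log n)$ time.

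The machine $P_I$ makes a single left-to-right pass over $\cent a^n\dollar$. On the work tape it maintains a binary counter that records the number of $a$'s seen so far (width $\lceil\log_2 n\rceil+O(1)=O(\log n)$), and, for a constant $t$ fixed later, $t$ further binary counters $C_1,\dots,C_t$ of width $O(\log n)$. At each input cell carrying an $a$ it tosses $t$ independent biased coins, each landing heads with probability $p_I=0.x_101x_201x_301\cdots$ where $x_i=1$ exactly when $i\in I$ (as in Section \ref{sec:PTM}), and increments $C_j$ whenever the $j$-th coin shows heads. When the head reaches $\dollar$, $P_I$ tests the length counter: if $n$ is not of the form $2^{6k}$ with $k\ge 1$ (binary representation $1$ followed by a positive multiple of $6$ zeros), it rejects; otherwise it recovers $k$, and for each $j$ it reads the coefficient of $2^{3k+2}$ in $C_j$ — the bit carrying $x_k$ by Lemma \ref{lem:64k} (applied with $64^k=n$ tosses), legitimate since $C_j\le n=2^{6k}$, exactly as in the proof of Theorem \ref{thm:linear-2PCA}. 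These $t$ bits are $t$ independent estimates of $x_k$; $P_I$ accepts iff their majority is $1$.

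For correctness, each $C_j$ counts heads in $64^k$ independent $p_I$-tosses, so by Lemma \ref{lem:64k} the extracted bit equals $x_k$ with probability at least $\tfrac34$; since the $t$ runs are independent, the majority equals $x_k$ with probability at least $1-\epsilon$ for any prescribed $\epsilon\in(0,\tfrac12)$ once $t$ is a sufficiently large constant, and by the definition of $p_I$ we have $x_k=1$ precisely when $k\in I$, i.e.\ precisely when $a^n\in\mathtt{POW64(I)}$. For the resources: the work tape holds at most $O(t\log n)=O(\log n)$ symbols; the input head only moves right. The single sweep takes $n$ steps, each accompanied by $O(1)$ increments of $O(\log n)$-wide counters at cost $O(\log n)$, and the final decoding (structural test of $n$, extraction of $k$, navigation to bit position $3k+2=O(\log n)$ in each $C_j$, majority vote) costs another $O(\log n)$; hence the total time is $O(n\log n)$, i.e.\ linearithmic.

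The one genuinely new point compared with Theorem \ref{thm:poly-unary-PTM} is that amplification cannot be done by repeating the computation, so it must be absorbed into the single pass by running $t$ copies of the coin-counting in parallel; the thing to check is simply that $t$ parallel $O(\log n)$-space counters still fit in $O(\log n)$ space and that the $t\cdot n$ coin tosses can be taken mutually independent — both immediate for a PTM. Everything else — the $O(\log\log n)$-bit arithmetic on $k$, addressing a prescribed bit inside a binary counter, and recognizing binary strings of the form $10^{6k}$ — is routine Turing-machine manipulation well within the stated bounds.
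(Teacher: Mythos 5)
Your proposal is correct and follows essentially the same route as the paper's proof: the paper also defines the languages via unary strings of length $64^k$ (its $\mathtt{UPOWER64(I)}$), makes one left-to-right pass maintaining a binary length counter and a heads counter for the $p_I$-biased coin, and at the end checks that the length is $2^{6k}$ and reads the bit of the heads counter carrying $x_k$ as in Lemma \ref{lem:64k}. Your only addition, the constant number $t$ of parallel counters with a majority vote, is a harmless refinement the paper does not need, since the single-counter error bound $1/4$ from Lemma \ref{lem:64k} already gives bounded error.
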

\begin{proof}
	We start with the definition of language $ \mathtt{UPOWER64} $:
	\[
		\mathtt{UPOWER64} = \{ 0^{2^{6k}}| k > 0 \}.
	\] 
	This language is recognized by 1DTMs in $ O( \log n) $ space, where $n$ is the length of the input. A binary counter on the work tape is used to count the number of zeros in the input. This can be done in a straightforward way. For each input symbol, the value of the counter is increased by 1. Remark that any update on the counter can be done in $ O(\log n) $ steps. Once the whole input is read, the counter is checked whether it is a power of $ 64 $, i.e. it must start with 1 and should be followed by only zeros and the number of zeros must be a multiple of 6 ($64=2^6$). The overall running time is $ O(n \log n) $.
	
	Like  in Theorem \ref{thm:poly-unary-PTM}, for any $ I \in \mathcal{I} $, we can define a corresponding language:
\[
	\mathtt{UPOWER64(I)} = \{ 0^n \mid 0^n \in \mathtt{UPOWER64} \mbox{ and } \sigmastar(\log_{64}n) \in I \}.
\] 
We again use a biased coin landing on head with probability $ p_I $.
When we read the input and count the number of zeros, we can in parallel toss the biased coin and count the number of heads in a second counter on the working tape. After reading the whole input, for the inputs in $ \tt UPOWER64 $, the decision is given by checking the $(3m-2)$-th bit of the second counter. This additional probabilistic procedure does not change the runtime and space asymptotically.
\end{proof}

The algorithm given in the proof of Theorem \ref{thm:poly-unary-PTM} for the language
	\[
		\mathtt{AM75'(I)} = \{ a^n \mid a^n \in \mathtt{AM75'} \mbox{ and } \sigmastar(\log_{64}F(n)) \in I \}
	\]
	does not need to change the direction of head on the $a$s. So, we can call that PTM sweeping. 
\begin{corollary}
	\label{cor:sweep-PTM}
	Polynomial--time bounded--error sweeping unary PTMs can recognize uncountably many languages in $ O(\log \log n) $ space.
\end{corollary}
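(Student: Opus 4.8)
The plan is to show that the polynomial-time $O(\log\log n)$-space unary PTM constructed in the proof of Theorem~\ref{thm:poly-unary-PTM} for $\mathtt{AM75'(I)}$ can be implemented so that its input head reverses direction only on the end-markers, which is exactly the sweeping restriction; the time and space bounds then come for free from that theorem.

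First I would revisit the deterministic phase that decides membership in $\mathtt{AM75'}$, i.e.\ that computes $F(n)=\min\{i\mid i\nmid n\}$ and tests whether $F(n)$ is a power of $64$. The only place the input head is used is in the primitive ``does $k$ divide $n$?'', and on a unary input this is carried out by a single left-to-right pass from $\cent$ to $\dollar$ during which a residue counter is maintained modulo $k$; at $\dollar$ one reads off whether the residue is $0$. To try the next candidate $k+1$ the head simply sweeps back to $\cent$ and a fresh left-to-right pass begins, so every turnaround happens on an end-marker. Since $F(n)<c\log n$ for a constant $c$ (Lemma~4.1.2(d) of \cite{Sze94}), only $O(\log n)$ values of $k$ are examined, each needing $O(\log k)=O(\log\log n)$ space for the binary value of $k$ and its residue, so the space stays $O(\log\log n)$; the ``power of $64$'' test is performed entirely on the work tape and moves no head. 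Hence this phase is sweeping.

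Next I would handle the probabilistic phase. At its start the work tape holds the binary value $F(n)=64^{m}$ with $m=O(\log\log n)$ (as $64^{m}<c\log n$), and the whole remaining computation --- tossing the biased $p_I$-coin $64^{m}$ times, counting the heads in binary (a value below $c\log n$, hence $O(\log\log n)$ bits), and inspecting the $(3m-2)$-th bit --- uses only internal states and the work tape, never moving the input head. By Lemma~\ref{lem:64k} this yields $x_m$ with probability at least $\tfrac34$, and a constant number of independent repetitions pushes the error below any desired $\epsilon<1/2$ without touching the input head. So the full machine changes head direction only on the end-markers, runs in polynomial time, and uses $O(\log\log n)$ space; as in Theorem~\ref{thm:poly-unary-PTM}, ranging over $I\in\mathcal{I}$ gives uncountably many languages $\mathtt{AM75'(I)}$, each recognized by such a sweeping PTM, which proves the corollary.

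I do not expect a genuine obstacle here: the statement is essentially a bookkeeping observation about the proof of Theorem~\ref{thm:poly-unary-PTM}. The one point that needs care is confirming that re-reading the input for each successive divisor test does not smuggle in a mid-tape reversal --- but because each re-read is a brand-new pass launched from $\cent$ and the return trip is a plain leftward sweep, the head-movement discipline is genuinely sweeping rather than merely two-way.
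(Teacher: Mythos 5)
Your proposal is correct and follows essentially the same route as the paper: the paper simply observes that the algorithm from Theorem~\ref{thm:poly-unary-PTM} never needs to reverse the input head on the $a$s (divisibility tests are full passes turning only at end-markers, and the power-of-$64$ test and the coin-tossing phase use only the work tape), hence the machine is sweeping and the corollary follows. Your write-up just makes this bookkeeping explicit.
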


\begin{theorem}
	Bounded--error linear--space sweeping PCAs can recognize uncountably many languages in subquadratic time.
\end{theorem}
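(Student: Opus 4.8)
The plan is to keep the uncountable family $\{ \mathtt{DIMA(I)} \mid I \in \mathcal{I} \}$ used in the proof of Theorem~\ref{thm:linear-2PCA} and to redesign the counter machine $P_I$ appearing there so that it becomes sweeping, that is, so that it reverses its head only on the end--markers. The machine $P_I$ of Theorem~\ref{thm:linear-2PCA} is genuinely two--way in exactly two places: (i) the deterministic test of membership in $\mathtt{DIMA}$, which relies on the two--way counter automaton $D$, and (ii) the extraction of the bit $x_k$ from the number $C$ of heads on the counter, which there is carried out by a local loop shuttling the head back and forth between two internal separators. The idea is to recast both so that each unit of work is performed inside one full traversal of the whole tape.

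\emph{Sweeping membership test for $\mathtt{DIMA}$.} First I would check, with internal states alone during a single pass, that the input is a sequence of nonempty $0$-blocks separated by nonempty $1$-blocks, with exactly two of those $1$-blocks equal to $11$ and all others equal to $1$; that exactly one $0$-block lies between the two $11$'s; and that the numbers of $0$-blocks before the first $11$ and after the first $11$ are both $\equiv 2 \pmod 3$, the latter being at least $2$. A further pass that uses the counter verifies that the former number exceeds the latter by exactly $3$; a short computation shows this forces the block structure to be $b_0, b_1, \dots, b_{6k}$ with the two $11$'s straddling $b_{3k+2}$, for some $k \ge 1$. It remains to check $|b_0| = 1$ and $|b_{i+1}| = 2 |b_i|$ for all $i$. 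The crucial observation is that two passes suffice for the doublings: in one left--to--right pass the machine tests $|b_1| = 2|b_0|$, $|b_3| = 2|b_2|$, $|b_5| = 2|b_4|$, $\dots$ by adding $2$ to the counter on every $0$ of an even--indexed block and subtracting $1$ on every $0$ of the following odd--indexed block, demanding that the counter hit $0$ exactly on that block's last $0$; a second pass handles $|b_2| = 2|b_1|$, $|b_4| = 2|b_3|$, $\dots$ in the same way. Throughout these $O(1)$ passes the counter stays below the input length, and any failed test causes immediate rejection.

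\emph{Sweeping coin tossing and bit extraction.} As in Theorem~\ref{thm:linear-2PCA}, one pass is used to toss a coin of bias $p_I$ for each $0$ of the last block $b_{6k}$ (of length $64^k = 2^{6k}$, which the machine locates as the block adjacent to the right end--marker on a right--to--left sweep), recording the number $C$ of heads on the counter; this costs at most $n$ counter cells. By Lemma~\ref{lem:64k} the bit $a_{3k+2}$ of $C$ equals $x_k$ with probability at least $\frac{3}{4}$, and $a_{3k+2}$ is exactly the parity of $\lfloor C / 2^{3k+2} \rfloor$. To read this parity while sweeping, the machine repeatedly subtracts $2^{3k+2}$ from the counter: on each pass it locates the marked block $b_{3k+2}$ (whose length is precisely $2^{3k+2}$ and which is flagged by the two $11$'s), decrements the counter once per $0$ of that block while the counter is positive, and keeps in its internal states the parity of the number of \emph{completed} subtractions. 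It stops as soon as a pass finds the counter already $0$ at the first $0$ of $b_{3k+2}$, or the counter drops to $0$ strictly before the last $0$ of $b_{3k+2}$; at that point the number of completed subtractions equals $\lfloor C / 2^{3k+2} \rfloor$, so the stored parity is $a_{3k+2}$, and the input is accepted iff this parity is $1$. Inputs outside $\mathtt{DIMA}$ have already been rejected with certainty, and every input in $\mathtt{DIMA}$ is classified correctly with probability at least $\frac{3}{4}$, so the machine is bounded--error; amplification is not even needed.

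\emph{Resources and the expected obstacle.} The counter never exceeds roughly $n$, so the space is linear. The membership and coin--tossing phases take $O(n)$ steps; the extraction phase performs at most $\frac{C}{2^{3k+2}} + 1 \le \frac{2^{6k}}{2^{3k+2}} + 1 = O(2^{3k}) = O(\sqrt{n})$ passes of length $O(n)$, hence $O(n^{3/2})$ steps in total, which is subquadratic. Since $\mathcal{I}$ is uncountable, so is $\{ \mathtt{DIMA(I)} \mid I \in \mathcal{I} \}$, and the statement follows. The step I expect to be the real obstacle is exactly turning the inherently two--way digit--extraction loop of Theorem~\ref{thm:linear-2PCA} into a sweeping routine without blowing up the counter or pushing the time up to quadratic; this is what forces the ``one subtraction of $2^{3k+2}$ per full sweep'' scheme and the resulting $\Theta(\sqrt{n})$ passes, and it is also why the running time settles at $n^{3/2}$ rather than linear.
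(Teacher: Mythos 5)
Your proposal is correct and follows essentially the same route as the paper: the same family $\{\mathtt{DIMA(I)} \mid I \in \mathcal{I}\}$, a constant number of sweeps for the membership test in $\mathtt{DIMA}$ (splitting the doubling checks by the parity of the block index), and a digit extraction that spends one full traversal per subtraction of $2^{3k+2}$, giving $O(2^{3k}) = O(\sqrt{n})$ passes and $O(n\sqrt{n})$ time. The only cosmetic differences are that you record the parity of completed subtractions in the finite control where the paper instead reads off the bit from where the counter reaches zero within a two-sweep round of subtracting $2^{3k+3}$, and that your two doubling-check passes must of course alternate direction (one left-to-right, one right-to-left) to remain sweeping, exactly as in the paper's second and third passes.
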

\begin{proof} 
	We modify the algorithms given in the proof of Theorem \ref{thm:linear-2PCA}. Remark that the algorithms given there run in linear time. Here the algorithms run in super-linear time. First we show how to deterministically recognize the language $ \mathtt{DIMA} $ in sweeping reading mode, i.e.,
	\[
	\mathtt{DIMA} = \{ 0^{2^0}10^{2^1}10^{2^2}1  \cdots 1 0^{2^{3k+1}}110^{2^{3k+2}}110^{2^{3k+3}}1 \cdots 10^{2^{6k}} \mid k > 0 \}.
	\]
	
	With one pass (reading the input from the left end-marker to the right end-marker), the input is checked without using counter whether having the following form
	\[
		01(0^+1)^+ 11 0^+ 11 (0^+1)^+0^+
	\] 
	and the number of 0-blocks are $ 6k+1 $ for some $ k> 0 $. Moreover, for a member, the number of 0-blocks before the first ``11'' is $ 3k+2 $ and  the number of 0-blocks after the second ``11'' is $ 3k-2 $. Therefore, by using the counter, we can check that the number of 0-blocks before the first ``11'' is 4 more than the number of 0-blocks after the second ``11''. If any of these checks fails, then the input is immediately rejected.
	
	In the second pass (reading input from the right end-marker to the left end-marker), it is checked that, for each $0 < i \leq 3k$, $(2i+1)$-th 0-block has twice more zeros than $(2i)$-th 0-block.
	
	In the third pass (reading input from the left end-marker to the right end-marker), it is checked that, for each $0 < i \leq 3k$, $(2i-1)$-th 0-block has twice less zeros than $(2i)$-th 0-block.
	
	Thus, in three passes, $ \mathtt{DIMA} $ can be recognized by a sweeping PCA.
	
	Then, as in the proof of Theorem \ref{thm:linear-2PCA}, for any $ I \in \mathcal{I} $, now we focus on the language:
	\[
	\mathtt{DIMA(I)} = \{ w \in \{0,1\}^*10^{m} \mid m>0, w \in \mathtt{DIMA},  \mbox{ and } \sigmastar(\log_{64}m) \in I \}.
	\]	
	If the given input is in $ \mathtt{DIMA} $, then we continue with the probabilistic procedure. (Otherwise, the input is rejected.) We perform the same walk as in the proof of Theorem \ref{thm:linear-2PCA}, but, due to sweeping reading mode, each walk can be done from one end-marker to the other end-marker. But the presence of symbols ``11'' allows us to follow the same procedure only with slowdown. The running time is $ O(2^{3k}) O(2^{6k}) = O(2^{9k}) $ and it is super-linear and subquadratic in the length of input. To be more precise, the running time is $ O(n\sqrt{n}) $ if the $n$ is the length of the input.
\end{proof}

\subsection{Quantum models}
\label{sec:QCFA}

For any $ I \in \mathcal{I} $, we can compute the membership of the positive integer $ j $ in $I$ as described below \cite{ADH97,SayY14C}. We call it Procedure \textit{ADH}.

The qubit spanned by $ \{ \ket{q_1},\ket{q_2} \} $ is set to $ \ket{q_1} $. Then, it is rotated with angle $ \theta_I $  $ 8^j $ times, which leaves the quantum state having angle
\[
	8^j \cdot 2 \pi \sum_{i=1}^\infty \left( \frac{x_i}{8^{i+1}} \right) = \pi \left( \frac{x_j}{4} \right) + \sum_{i=j+1}^\infty \left( \frac{x_i}{8^{i+1}} \right)
\]
from the initial position. After an additional rotation by $ \frac{\pi}{4} $, the final angle from $ \ket{q_1} $ is $\frac{\pi}{2} + \delta$ if $ x_i=1 $ ($ i \in I $) and it is $ \delta $ if $ x_i=-1 $ ($i \notin I$), where $ \delta $ is sufficiently small such that the probability of the qubit being in $ \ket{q_2} $ ($ \ket{q_1} $) is bigger than $ 0.98 $ if $ i \in I $ ($i \notin I$).

\newcommand{\powereq}{\mathtt{POWER\mbox{-}EQ}}
\newcommand{\powereqI}{\mathtt{POWER\mbox{-}EQ(I)}}

Say and Yakary\i lmaz \cite{SayY14C} presented a bounded--error polynomial--time 2QCFA algorithm, say $M$, for 
\[
	\powereq = \{ a b a^7 b a^{7 \cdot 8} b a^{7 \cdot 8^2} b a^{7 \cdot 8^3} b \cdots b a^{7 \cdot 8^n} \mid n \geq 0 \}.
\]
Remark that every member has $ 8^{n+1} $ $a$s for some $ n \geq 0 $. It is clear that for any member of $ \powereq $ having $ 8^{n+1} $ $a$s and for any $ I \in \mathcal{I} $, $ M $ can be modified, say $M_I$, in order to determine whether $ n $ is in $ I $ or not by using Procedure ADH with high probability. So, $ M_I $ can recognize the following language with bounded error \cite{SayY14C}
\[
	\powereqI = \{ w \in \{a,b\}^* \mid w \in \powereq \mbox{ and } \log_8( |w|_a ) \in I \}.
\]
Then, we can follow that 2QCFAs can recognize uncountably many languages with bounded error.

Procedure ADH can be implemented by a rtQCFA having a single qubit trivially, say $R_{I}$ for $ I \in \mathcal{I} $. So, if we show that $ \powereq $ is recognized by a restarting rtQCFA, say $ R $, then, we can follow that rtQCFAs can recognize uncountably many languages. Since $ R $ can execute $ R_I $ in parallel to its original algorithm, i.e. $ R $ and $ R_{ADH} $ are tensorred such that if $ R $ is in the restarting state, then all computation is restarted; otherwise, the input is accepted if and only if both $ R $ and $ R_I $ give the decision of ``accepting''. The obtained restarting rtQCFA gives its decisions with bounded error. We refer the reader to \cite{YS11B} for the technical details to obtain a restarting bounded--error rtQCFA by tensorring two bounded--error restarting rtQCFAs, where the results are given for general realtime QFA models but it can be followed for rtQCFAs in the same way since general realtime QFA models and rtQCFAs can simulate each other exactly.

\begin{theorem}
	\label{thm:rtQCFA}
	The language $ \powereq $ can be recognized by a restarting rtQCFA $ R $ with bounded error.
\end{theorem}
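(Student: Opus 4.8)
The plan is to present $R$ as a single left-to-right machine that, on each pass, runs two things in parallel: a classical finite control checking the regular skeleton of the input, and a one-qubit quantum test checking the genuinely non-regular ``geometric'' part, with the outcome of the pass (accept, reject, or restart into $s_i$) read off from both. Observe that $w\in\powereq$ iff (i) $w$ has the form $\cent\,a\,b\,a^7\,(b\,a^+)^*\,\dollar$, which a finite control verifies in one sweep, and (ii) for every $k\ge 1$ the $(k{+}1)$-st $a$-block has exactly $8$ times the length of the $k$-th; equivalently, the running $a$-count right after the $k$-th block equals $8^k$. Condition (i) is classical: if it fails, the pass rejects. For condition (ii) I would reuse the rotation idea behind Say and Yakary\i lmaz's 2QCFA $M$ for $\powereq$, with the key observation that the two-way amplification loop of $M$ is precisely what the restarting discipline supplies: ``return to $\cent$ and retry'' becomes ``enter $s_i$''.

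For the qubit test, starting from $\ket{q_1}$, the machine rotates the qubit by a fixed angle per $a$ inside one block and by a correspondingly scaled angle per $a$ inside the next, chosen — in the ADH style recalled above — so that the accumulated rotation returns the qubit exactly to $\ket{q_1}$ when all the ratio-$8$ constraints hold and deflects it toward $\ket{q_2}$ when some constraint fails; at $\dollar$ the qubit is measured in the $\{\ket{q_1},\ket{q_2}\}$ basis. The pass then behaves as follows: reject if the skeleton check failed or if the outcome is $\ket{q_2}$; if the outcome is $\ket{q_1}$, accept with some small fixed probability $p_0$ and otherwise enter $s_i$. A member of $\powereq$ is never rejected and is observed in $\ket{q_1}$ on every pass, so its overall acceptance probability $\alpha/(\alpha+\beta)$ equals $1$ since $\beta=0$. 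For a non-member, some block boundary violates the ratio, so each pass rejects with probability at least some constant $c_0>0$; choosing $p_0\ll c_0$ makes $\alpha/(\alpha+\beta)$ bounded well below $1/2$, which is the claimed bounded error. (The subsequent tensoring with $R_I$ that yields the uncountable family is handled by the machinery already described around the statement.)

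The step I expect to be the crux is securing that constant $c_0$: a violating input must be caught with probability bounded below \emph{independently} of how small the length discrepancy is relative to the block lengths. This is forced by the restarting semantics — a restarting rtQCFA only ever realises the ratio $\alpha/(\alpha+\beta)$, and plain repetition, which is all restarting offers, does not shrink it; so, unlike $M$, which may repeat a weak test for super-polynomially many rounds and drive the miss probability to $0$, here a test that detects some violations with only vanishing probability would accept the corresponding non-members with probability near $1$. Overcoming this is exactly where the geometric/ADH choice of parameters is needed: a nonzero gap between $8\ell_k$ and $\ell_{k+1}$ can be an arbitrarily large integer and, in a single global rotation, could even cancel against gaps at other boundaries, so I would have each pass first commit online (via internal coin flips at the block separators) to testing one particular boundary, reducing the test to the clean two-block comparison, and then analyse that comparison exactly as the ``significant digit'' is isolated in Procedure \textit{ADH}, where the bounded range of the relevant quantity keeps the final angle a bounded distance from a multiple of $2\pi$. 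The catch is that committing to a given boundary happens with a probability that decays in its index, and reconciling this decay with the need for an input-independent detection probability — whether by a smarter, flatter online selection, or by a different genuinely global test — is the delicate point, and is where I would expect the real work to lie.
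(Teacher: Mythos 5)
You have correctly located the crux --- a restarting rtQCFA only realises the ratio of per-round accept to per-round reject probability, so a comparison test whose detection probability vanishes with the size of the discrepancy (relative to the block length) is fatal --- but your proposal does not close this gap, and the rotation/ADH route you sketch cannot close it as stated. A net rotation of $\theta(8S_i - S_{i+1})$ detects a violation with probability $\sin^2(\theta d)$ for the integer discrepancy $d$, which can be arbitrarily small however $\theta$ is chosen, and your fallback (committing to one boundary by coin flips at the separators) both inherits this problem and adds a selection probability that decays with the boundary index. You explicitly flag this as ``where the real work would lie,'' so the proof is incomplete at its essential step.

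The paper's proof uses a different mechanism that resolves exactly this issue, and it is worth seeing why. Instead of rotations, $R$ keeps \emph{integer counters in the amplitudes}: after $t$ quantum steps the non-halting unnormalized state is $l^{t}\paran{1 ~~ E_i ~~ j}^T$ with $E_i = 8S_{i-1}$ the expected block length and $j$ the running count, realised by integer matrices $lA$ sitting in the top-left corner of $9\times 9$ unitaries (the extra states absorb the lost norm and trigger restarts). At each separator the third amplitude becomes $l^{t_i}(S_i - E_i)$ and is measured; since $S_i - E_i$ is an \emph{integer}, a violation forces rejection probability at least $l^{2t_i} \geq l^{2t_k}$. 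The second half of the trick --- which your ``fixed $p_0$'' normalization misses --- is that the acceptance probability is not a constant but is itself scaled down to $l^{2t_k+2}$, so the accept-to-reject ratio is at most $l^2$ \emph{uniformly in the input}, and the restarting semantics then yields rejection probability at least $1/(1+l^2) > 1/2$ for every non-member. In short: integrality of the discrepancy plus matching the decay of the acceptance probability to the decay of the rejection probability is what supplies the input-independent gap you were looking for; a fixed per-round acceptance probability cannot work. The rotation technique is reserved, as in the paper, for the bounded quantity (membership of $n$ in $I$) where it is sound.
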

\begin{proof}
	The quantum part of $ R $ has 9 states ($ \{ q_1,\ldots,q_{9} \} $) but only the first three of them are used significantly. Before each unitary operation, the quantum state has always zeros after the significant first three entries
	\[
		\paran{ \alpha_1 ~~ \alpha_2 ~~ \alpha_3 ~~  0 ~~ \cdots ~~ 0 }^T.
	\]
	After we apply a unitary operator $ U $, we obtain a new quantum state. Then, we make some measurements such that if the system is in $ span\{ \ket{q_4,\ldots,\ket{q_9}} \} $, then the computation is always restarted. So, if the computation is not restarted, the new quantum state has always zeros for the last six entries.
	\[
		\paran{ \alpha'_1 ~~ \alpha'_2 ~~ \alpha'_3 ~~ 0 ~~ \cdots ~~ 0 }^T.
	\]
	Sometimes the measurement operator can also affect the first three states that will be specified later. 
	
	Each unitary operator is a $ 9 \times 9 $-dimensional unitary matrix. However, the significant parts are the top-left  $ (3 \times 3) $-dimensional matrices due to the measurement operators. So, we can trace the computation only by a 3-dimensional vector and $ (3 \times 3) $-dimensional matrices. 
	
	We describe our algorithm with integer matrices with a real coefficient $ 0 < l < 1 $:
	\[
		l A = 
		l \mymatrix{ccc}{
			  a_{11} & a_{12} & a_{13}
			\\a_{21} & a_{22} & a_{23}
			\\a_{31} & a_{32} & a_{33}
		}.
	\]
	Remark that $ lA $ is the top-left corner of a unitary matrix and all the other entries can be filled arbitrarily providing that the matrix is unitary. For our purpose, first we define our $(3 \times 3)$-dimensional matrices $A$s, which do the main tasks, and then complete the missing parts of unitary matrices by selecting a fixed $ l $ for all $A$s. We refer the reader to \cite{YS10A,YS11A} for the details of how to pick nonnegative real $ l < 1  $ and fill the missing parts of unitary matrices. 
	
	After a measurement operator, we can obtain more than one unnormalized state vector having norm less than 1. Depending on the measurement outcome, the system collapses into one of them and then the corresponding unnormalized state vector is normalized (norm-1 vector). On the other hand, since the probabilities can be calculated directly from the entries of unnormalized state vectors, we trace the computation with unnormalized state vectors. 
	
	After all these technical descriptions, we can give the details of our quantum algorithm. (QFA algorithms based on such assumptions have been presented before (e.g. see $ \cite{Yak13C} $).)
	
	The quantum part is  in state $ \ket{v_0} = \paran{1~~0~~0}^T $ at the beginning. If the input does not start with $ aba^7b $, then it is rejected deterministically. Otherwise, by reading 7 $a$s, the quantum part is set to
	\[
		\ket{\widetilde{v_7}} = l^7 \myvector{1 \\ 7 \cdot 56 \\ 0  } = \paran{ l \mymatrix{cccc}{ 1 & 0 & 0  \\ 56 & 1 & 0  \\ 0 & 0 & 0   } } ^ 7 \myvector{1 \\ 0 \\ 0 }.
	\]
	If the input is $ aba^7b $, then it is accepted deterministically. 
	
	In the remaining part, we assume that the input is of the form $ aba^7b (a^+b)^+ $. Otherwise, the input is rejected deterministically. Remark that, after each quantum step, the computation is restarted with some probability and so the computation in a single round can reach to the end-marker only with a very small (exponentially small in the input length) probability.
	
	At the beginning of each block of $ a $s, say the $i$-th block, the quantum state is
	\[
		\ket{\widetilde{v_{t_{i-1}}}} =  l^{t_{i-1}} \myvector{1 \\ 8 S_{i-1} \\ 0 },
	\]
	where 
	\begin{itemize}
		\item $ \ket{\widetilde{v_{t_{i-1}}}} $ is the non-halting \textit{unnormalized} quantum state vector,
		\item the first block ($i=1$) is the first $ a $s after $ aba^7b $,
		\item $ S_{i-1} $ is the number of $a$s in the previous block with $ S_0 = 7 $, and
		\item $ t_{i-1} = S_0 + S_1 + \cdots + S_{i-1} + i-1 $ is the number of unitary operators applied until that step, i.e. 7 unitary operators are applied on the input $ aba^7b $ (the others can be assumed as identity operator), and then for each symbol, a unitary operator is applied.
	\end{itemize}   
	Remark that the expected number of $a$s in the $i$-th block is already written as the amplitude of the $ \ket{q_2} $, i.e., for any member, the number of $a$s in the $i$-th block is 8 times of the number of $a$s in the previous block.  
	
	During reading the $i$-th block, the number of $a$s are counted and kept as the amplitude of $ \ket{q_3} $:
	\[
		l^{t_{i-1}+j} \myvector{ 1 \\ E_i \\ j } = l \mymatrix{ccc}{1 & 0 & 0 \\ 0 & 1 & 0 \\ 1 & 0 & 1} l^{t_{i-1}+j-1} \myvector{1 \\ E_i \\ j-1}.
	\]
	After reading the block, just before reading $b$, the quantum state is
	\[
		l^{t_i - 1} \myvector{1 \\ E_i \\ S_i}.
	\]
	Then, we obtain the following vector after reading $ b $ before the measurement:
	\[
		l^{t_i} \myvector{ 1 \\ E_{i+1} = 8 \cdot S_i \\ S_i - E_i } = l \mymatrix{crc}{1 & 0 & 0 \\ 0 & 0 & 8 \\ 0 & -1 & 1} l^{t_i-1} \myvector{1 \\ E_i \\ S_i}.
	\]
	After this, the measurement operator, additional to its previously described standard behaviour, also checks whether the system is in $ span\{\ket{q_1},\ket{q_2} \} $ or $ span\{ \ket{q_3} \} $. In the latter case, the input is rejected, and the computation continues, otherwise. We have two cases:
	\begin{itemize}
		\item If $ S_i \neq E_i$, then $ S_i - E_i $ is a nonzero integer, and so, the input is rejected with probability at least $ l^{2t_i} $.
		\item If $ S_i = E_i $, then the input is rejected with zero probability.
	\end{itemize} 
	That is, for any non-member, the input is rejected after a block with some nonzero probability. For each member, on the other hand, the input is never rejected until end of the computation. 
	
	At the end of the computation, the state before reading the right end-marker is 
	\[
		l^{t_k} \myvector{1 \\ 8 \cdot E_{k} \\ 0}
	\]
	if there are $ k $ blocks of $a$s. Then, we obtain the following quantum state after reading the right end-marker
	\[
		l^{t_k+1} \myvector{1 \\ 0 \\ 0} = l \mymatrix{ccc}{1 & 0 & 0 \\ 0& 0 & 0 \\ 0 & 0 & 0 } l^{t_k} \myvector{1 \\ 8 \cdot E_{k} \\ 0}
	\]
	and the input is accepted if $ \ket{q_1} $ is observed. Then the input is accepted with probability $ l^{2t_k+2} $, which is clearly at least $ l^2 $ times of any possible rejecting probability before.
	
	Now, we can analyse a single round of $ R $, the period from the initial configuration to give a decision or to restart the computation, and then calculate the overall probabilities on the given input.
	
	Any member is accepted with an exponentially small but non-zero probability ($ l^{2t_k+2} $) and it is rejected with zero probability. So, in exponential expected time, the input is accepted with probability 1. 
	
	Any non-member, on the other hand, is again accepted with a very small non-zero probability but it is also rejected with a probability sufficiently bigger than the accepting probability. So, in exponential expected time, the input is rejected with probability $ \frac{R}{A+R} $ that is at least $ \frac{R}{l^2 R + R} = \frac{1}{1+l^2} > \frac{1}{2} $, where $ A $ and $ R $ are the accepting and rejecting probabilities, respectively, in a single round (see \cite{YS10B} for the details of calculating the overall rejecting probability). Remark that $ l $ can be picked arbitrarily small and so the rejecting probability can be arbitrarily close to 1.
\end{proof}

\begin{corollary}
	Exponential expected time restarting rtQCFAs can recognize uncountably many languages with bounded error.
\end{corollary}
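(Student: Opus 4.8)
The plan is to obtain this corollary by assembling Theorem~\ref{thm:rtQCFA} with the tensoring construction already sketched in the discussion preceding that theorem. Fix any $I \in \mathcal{I}$. Theorem~\ref{thm:rtQCFA} provides a restarting rtQCFA $R$ recognizing $\powereq$ with bounded error, and Procedure ADH is implementable by a one-qubit rtQCFA $R_I$. First I would form the parallel (tensor) machine $R'_I$ that runs $R$ and $R_I$ simultaneously: $R'_I$ restarts exactly when $R$ enters its restarting state, and upon reading the right end-marker it accepts iff $R$ accepts \emph{and} the final measurement on the ADH qubit of $R_I$ reports $\log_8(|w|_a) \in I$. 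Since every member of $\powereq$ has $8^{n+1}$ letters $a$ for a unique $n \geq 0$, the machine $R'_I$ recognizes
\[
	\powereqI = \{ w \in \{a,b\}^* \mid w \in \powereq \mbox{ and } \log_8(|w|_a) \in I \}.
\]

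Next I would verify the error bound. The delicate point is that the tensor of two restarting machines is not automatically bounded-error: the restarting events and halting weights of the two components interact, and the contribution of the rare ``wrong'' branch of $R_I$ must be dominated by the rejection gap produced by $R$. This is precisely the situation treated in \cite{YS11B}, which shows how to combine two bounded-error restarting realtime QFAs into a single bounded-error restarting realtime QFA; because general realtime QFA models and rtQCFAs simulate each other exactly, the same construction applies verbatim here. Invoking that result, $R'_I$ recognizes $\powereqI$ with bounded error, and its expected running time remains exponential, inherited from the analysis in the proof of Theorem~\ref{thm:rtQCFA}.

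Finally I would argue uncountability. The map $I \mapsto \powereqI$ is injective: given distinct $I, I' \subseteq \mathbb{Z}^+$, choose $n$ in their symmetric difference; then the string $a b a^7 b a^{7 \cdot 8} b \cdots b a^{7 \cdot 8^n}$ belongs to $\powereq$ and lies in exactly one of $\powereqI$ and its counterpart for $I'$. Hence $\{ \powereqI \mid I \subseteq \mathbb{Z}^+ \}$ has the cardinality of $\mathcal{I}$, which is uncountable, and each of these languages is recognized by an exponential-expected-time restarting rtQCFA with bounded error.

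The main obstacle I expect is the error-bound bookkeeping for the tensored restarting machine; everything else—the parallel simulation and the injectivity of $I \mapsto \powereqI$—is routine once Theorem~\ref{thm:rtQCFA} is available, and the bookkeeping itself can be outsourced to \cite{YS11B}.
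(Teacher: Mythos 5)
Your proposal is correct and follows the paper's own route: it tensors the restarting rtQCFA $R$ of Theorem~\ref{thm:rtQCFA} with the one-qubit ADH machine $R_I$, restarts whenever $R$ restarts, accepts iff both components accept, and delegates the bounded-error bookkeeping for the tensored restarting machine to \cite{YS11B}, exactly as the discussion preceding that theorem prescribes. One cosmetic slip: the member $aba^7b\cdots ba^{7\cdot 8^n}$ has $8^{n+1}$ letters $a$, so for $n$ in the symmetric difference of $I$ and $I'$ the separating witness should be the member with $8^{n}$ letters $a$ (parameter $n-1$); the injectivity of $I\mapsto\powereqI$ and the uncountability conclusion are unaffected.
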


It is still open whether rtQCFAs can recognize a nonregular language with bounded error in polynomial time.

Some algorithms can be space sufficient only for the members. That is known as recognition with middle space \cite{Sze94}. The standard space usage is known as recognition with strong space. Until now, we focus on strong space bounds. On the other hand, we know that 2QCCAs can recognize the following nonregular unary language with bounded error in middle logarithmic space \cite{BGRY16}:
\newcommand{\upower}[1]{\mathtt{UPOWER #1}}
\[
	\upower{2} = \{ a^{2^n} \mid n \geq 0 \}.
\]
Here the base-2 is not essential and it can be replaced with any integer bigger than 2. Therefore,
2QCCAs can also recognize
\[
	\upower{8} = \{ a^{8^n} \mid n \geq 0 \}
\]
with bounded error in middle logarithmic space (by slightly modifying the algorithm for $ \upower{2} $). Moreover, for any $ I \in \mathcal{I} $, 2QCCAs recognize the following language
\[
	\upower{8(I)} = \{ a^{8^n} \mid n-1 \in I \}
\]
with bounded error in middle logarithmic space, i.e. we first determine whether the input is of the form $ a^{8^n} $ with high probability. If so, we call Procedure ADH, which does not use the counter, to determine whether $ (n-1) $ is in $ I $ or not with high probability.

\begin{theorem}
	Unary middle logarithmic--space 2QCCAs can recognize uncountably many languages with bounded error.
\end{theorem}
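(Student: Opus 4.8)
The plan is to turn the observations made immediately before the statement into a clean cardinality argument, so the proof is essentially a wrapper around the already--established recognition of $ \upower{8(I)} $. First I would recall the construction: for every $ I \in \mathcal{I} $ the unary language
\[
	\upower{8(I)} = \{ a^{8^n} \mid n-1 \in I \}
\]
is recognized with bounded error by a 2QCCA that operates in two stages. The first stage adapts the middle--logarithmic--space algorithm for $ \upower{2} $ of \cite{BGRY16} (with base $2$ replaced by base $8$) to verify with high probability that the input is of the form $ a^{8^n} $ and to expose the exponent $ n $ to the rest of the computation; the second stage runs Procedure ADH on a single qubit, with the gap $ \delta $ chosen so that membership of $ n-1 $ in $ I $ is decided with probability at least $ 0.98 $. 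Since each stage errs with only a small constant probability, a constant number of independent repetitions (or a direct boosting of the ADH gap) brings the overall error below $ 1/2 $, so $ \upower{8(I)} $ is recognized with bounded error.

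Next I would do the space bookkeeping. The point is that Procedure ADH manipulates only a fixed--size quantum register and never touches the counter, so on any input the counter is used exclusively by the first stage. On a member $ a^{8^n} $ the first stage behaves exactly like the middle--logarithmic--space recognizer for $ \upower{8} $, whose counter value stays within $ O(\log |w|) $; hence $ \upower{8(I)} $ is recognized by a unary 2QCCA in middle logarithmic space (in the sense of \cite{Sze94}, where the space bound is required only on accepted inputs) with bounded error.

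Finally I would verify that the assignment $ I \mapsto \upower{8(I)} $ is injective. If $ I \neq J $, choose $ m \in (I \setminus J) \cup (J \setminus I) $; then the word $ a^{8^{m+1}} $ lies in exactly one of $ \upower{8(I)} $ and $ \upower{8(J)} $, so these two languages are distinct. As $ \mathcal{I} $ is uncountable, the family $ \{ \upower{8(I)} \mid I \subseteq \mathbb{Z}^+ \} $ therefore contains uncountably many pairwise distinct languages, each recognized by a unary bounded--error 2QCCA in middle logarithmic space; this is exactly the claim.

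I do not expect the wrapper argument above to be the hard part. The real work lies in the ingredients borrowed from earlier: justifying that replacing base $2$ by base $8$ does not break the bounded--error middle--logarithmic--space recognition of $ \upower{2} $ from \cite{BGRY16}, and that tensoring the ADH subroutine of \cite{ADH97,SayY14C} onto that algorithm leaves both its space bound and its error guarantee intact. Both are routine given the cited constructions, but they are the steps that actually require care.
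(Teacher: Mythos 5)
Your proposal is correct and follows essentially the same route as the paper: adapt the middle--logarithmic--space 2QCCA for $\upower{2}$ from \cite{BGRY16} to base $8$, verify the form $a^{8^n}$ with high probability, run Procedure ADH (which never touches the counter) to decide membership of the exponent in $I$, and conclude from the uncountability of $\mathcal{I}$. The only difference is that you spell out the injectivity of $I \mapsto \upower{8(I)}$ and the error bookkeeping explicitly, which the paper leaves implicit.
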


\section{Concluding remarks}
\label{sec:conc}

In this paper, we identify some small space (and time) bounds for bounded--error probabilistic and quantum models that can recognize uncountably many languages. We list the positive cases that we obtain below. We also present the related open cases.
\begin{itemize}
	\item Unary languages:
	\begin{itemize}
		\item Polynomial--time $O(\log \log n)$--space sweeping PTMs (open for $o(\log \log n)$-space)
		\item Linearithmic--time $O(\log n)$--space 1PTMs
		\item Middle $O(\log n)$--space 2QCCAs (open for better space bounds and/or polynomial--time; open for 2PCAs)
	\end{itemize}
	\item Binary languages:
	\begin{itemize}	
		\item $\omega(1)$--space 2PCAs (open for $O(1)$--space (or equivalently 2PFAs))
		\item Polynomial--time $O(n)$--space 2PCAs (open for polynomial--time $o(n)$--space)		
		\item Restarting rtQCFAs (open for polynomial--time)			
	\end{itemize}
\end{itemize}
The above list can be extended with some other restricted models (for example, pushdown finite automata, Turing machines with limited reversal complexity, multihead automata, constant--space interactive proof systems), which we leave as future works.
%\section*{Acknowledgements}

\section*{Acknowledgements}
We thank to anonymous referees for their very helpful comments.

%This document class was obtained by modifying the class file developed for DCFS 2007 by Viliam Geffert.

\biblio{tcs} % give the name of a bib-file here

\EndOfArticle % please do not remove this command (it is needed for the proceedings)


\begin{thebibliography}{10}
\expandafter\ifx\csname url\endcsname\relax
  \def\url#1{{\tt #1}}\fi
\expandafter\ifx\csname urlprefix\endcsname\relax\def\urlprefix{}\fi

\bibitem{ADH97}
\MakeUppercase{ L.~M. Adleman}, \MakeUppercase{ J.~DeMarrais}, \MakeUppercase{
  M.-D.~A. Huang}, Quantum Computability. {\it SIAM Journal on Computing\/} 26
  (1997) 5, 1524--1540.

\bibitem{AM75}
\MakeUppercase{ H.~Alt}, \MakeUppercase{ K.~Mehlhorn}, A~Language over a One
  Symbol Alphabet Requiring Only {$O(\log\log n)$} space. {\it ACM SIGACT\/} 7
  (1975), 31--33.

\bibitem{AW02}
\MakeUppercase{ A.~Ambainis}, \MakeUppercase{ J.~Watrous}, Two--way finite
  automata with quantum and classical states. {\it Theoretical Computer
  Science\/} 287 (2002) 1, 299--311.

\bibitem{AY15}
\MakeUppercase{ A.~Ambainis}, \MakeUppercase{ A.~Yakary{\i}lmaz}, {\it Automata
  and Quantum Computing\/}. Technical Report 1507.01988, arXiv, 2015.

\bibitem{BGRY16}
\MakeUppercase{ Z.~Bedn{\'{a}}rov{\'{a}}}, \MakeUppercase{ V.~Geffert},
  \MakeUppercase{ K.~Reinhardt}, \MakeUppercase{ A.~Yakaryilmaz}, New Results
  on the Minimum Amount of Useful Space. {\it International Journal of
  Foundations of Computer Science\/} 27 (2016) 2, 259--282.

\bibitem{MD16}
\MakeUppercase{ M.~Dimitrijevs}, Capabilities of Ultrametric Automata with One,
  Two, and Three States. In: {\it Theory and Practice of Computer Science\/}.
  LNCS 9587, 2016, 253--264.

\bibitem{DS90}
\MakeUppercase{ C.~Dwork}, \MakeUppercase{ L.~Stockmeyer}, A time complexity
  gap for two-way probabilistic finite-state automata. {\it SIAM Journal on
  Computing\/} 19 (1990) 6, 1011--1123.

\bibitem{Fre81}
\MakeUppercase{ R.~Freivalds}, Probabilistic Two-Way Machines. In: {\it
  Proceedings of the International Symposium on Mathematical Foundations of
  Computer Science\/}. 1981, 33--45.

\bibitem{Kan91B}
\MakeUppercase{ J.~Ka{\c{n}}eps}, Regularity of One-Letter Languages Acceptable
  by 2-Way Finite Probabilistic Automata. In: {\it FCT'91\/}. 1991, 287--296.

\bibitem{KF90}
\MakeUppercase{ J.~Ka{\c{n}}eps}, \MakeUppercase{ R.~Freivalds}, Minimal
  nontrivial space complexity of probabilistic one-way $\mbox{T}$uring
  machines. In: {\it MFCS'90\/}. LNCS 452, 1990, 355--361.

\bibitem{KW97}
\MakeUppercase{ A.~Kondacs}, \MakeUppercase{ J.~Watrous}, On the power of
  quantum finite state automata. In: {\it FOCS'97\/}. 1997, 66--75.

\bibitem{NC00}
\MakeUppercase{ M.~A. Nielsen}, \MakeUppercase{ I.~L. Chuang}, {\it Quantum
  Computation and Quantum Information\/}. Cambridge University Press, 2000.

\bibitem{Paz71}
\MakeUppercase{ A.~Paz}, {\it Introduction to Probabilistic Automata\/}.
  Academic Press, New York, 1971.

\bibitem{Rab63}
\MakeUppercase{ M.~O. Rabin}, Probabilistic Automata. {\it Information and
  Control\/} 6 (1963), 230--243.

\bibitem{SayY14}
\MakeUppercase{ A.~C.~C. Say}, \MakeUppercase{ A.~Yakary{\i}lmaz}, Quantum
  Finite Automata: A Modern Introduction. In: {\it Computing with New
  Resources\/}. LNCS 8808, Springer, 2014, 208--222.

\bibitem{SayY14C}
\MakeUppercase{ A.~C.~C. Say}, \MakeUppercase{ A.~Yakary{\i}lmaz}, {\it Magic
  coins are useful for small-space quantum machines\/}. Technical Report
  \href{http://eccc.hpi-web.de/report/2014/159}{ECCC:TR14-159}, 2016.

\bibitem{SY14A}
\MakeUppercase{ A.~M. Shur}, \MakeUppercase{ A.~Yakary{\i}lmaz}, Quantum,
  Stochastic, and Pseudo Stochastic Languages with Few States. In: {\it UCNC
  2014\/}. LNCS 8553, Springer, 2014, 327--339.

\bibitem{SY16A}
\MakeUppercase{ A.~M. Shur}, \MakeUppercase{ A.~Yakary{\i}lmaz}, More on
  quantum, stochastic, and pseudo stochastic languages with few states. {\it
  Natural Computing\/} 15 (2016) 1, 129--141.

\bibitem{Sze94}
\MakeUppercase{ A.~Szepietowski}, {\it Turing Machines with Sublogarithmic
  Space\/}. Springer-Verlag, 1994.

\bibitem{Yak12C}
\MakeUppercase{ A.~Yakary{\i}lmaz}, Superiority of one-way and realtime quantum
  machines. {\it RAIRO - Theoretical Informatics and Applications.\/} 46 (2012)
  4, 615--641.

\bibitem{Yak13C}
\MakeUppercase{ A.~Yakary{\i}lmaz}, Public qubits versus private coins. In:
  {\it The Proceedings of Workshop on Quantum and Classical Complexity\/}.
  Univeristy of Latvia Press, 2013, 45--60. ECCC:TR12-130.

\bibitem{YFSA12A}
\MakeUppercase{ A.~Yakary{\i}lmaz}, \MakeUppercase{ R.~Freivalds},
  \MakeUppercase{ A.~C.~C. Say}, \MakeUppercase{ R.~Agadzanyan}, Quantum
  computation with write-only memory. {\it Natural Computing\/} 11 (2012) 1,
  81--94.

\bibitem{YS10A}
\MakeUppercase{ A.~Yakary{\i}lmaz}, \MakeUppercase{ A.~C.~C. Say}, Languages
  recognized by nondeterministic quantum finite automata. {\it Quantum
  Information and Computation\/} 10 (2010) 9\&10, 747--770.

\bibitem{YS10B}
\MakeUppercase{ A.~Yakary{\i}lmaz}, \MakeUppercase{ A.~C.~C. Say}, Succinctness
  of two-way probabilistic and quantum finite automata. {\it Discrete
  Mathematics and Theoretical Computer Science\/} 12 (2010) 2, 19--40.

\bibitem{YS11B}
\MakeUppercase{ A.~Yakary{\i}lmaz}, \MakeUppercase{ A.~C.~C. Say}, {\it
  Probabilistic and quantum finite automata with postselection\/}. Technical
  Report arXiv:1102.0666, 2011. (A preliminary version of this paper appeared
  in the \textit{Proceedings of Randomized and Quantum Computation (satellite
  workshop of MFCS and CSL 2010)}, pages 14--24, 2010).

\bibitem{YS11D}
\MakeUppercase{ A.~Yakary{\i}lmaz}, \MakeUppercase{ A.~C.~C. Say}, {\it Proving
  the power of postselection\/}. Technical Report arXiv:1111.3125, 2011.

\bibitem{YS11A}
\MakeUppercase{ A.~Yakary{\i}lmaz}, \MakeUppercase{ A.~C.~C. Say},
  Unbounded-error Quantum Computation with Small Space Bounds. {\it Information
  and Computation\/} 279 (2011) 6, 873--892.

\bibitem{ZQLG12}
\MakeUppercase{ S.~Zheng}, \MakeUppercase{ D.~Qiu}, \MakeUppercase{ L.~Li},
  \MakeUppercase{ J.~Gruska}, One-Way Finite Automata with Quantum and
  Classical States. In: {\it Languages Alive\/}. LNCS 7300, 2012, 273--290.

\end{thebibliography}
\end{document}